\newtheorem{theorem}{Theorem}[section]
\newtheorem{lemma}[theorem]{Lemma}
\newtheorem{definition}[theorem]{Definition} 
\newtheorem{remark}{Remark}[section]
\newtheorem*{gLLL}{General LLL}
\newtheorem*{localalgo}{Local Algorithms}
\newcommand\ex{{\mathbb{E}}}
\newcommand{\beq}{\begin{equation}}
\newcommand{\eeq}{\end{equation}}
\newcounter{fooTH}
\newcounter{fooEQ}
\begin{document}

\title{Simple Local Computation Algorithms for  the General \\ Lov\'{a}sz Local Lemma}

\author{
Dimitris Achlioptas
\\ University of Athens
\\
{\small optas@di.uoa.gr}
\\
\and
Themis Gouleakis
\thanks{
Research supported by NSF Award Numbers CCF-1650733, CCF-1733808, CCF-1740751, and IIS-1741137.} 
\\ University of Southern California
\\
{\small tgoule@mit.edu}
\\
\and 
Fotis Iliopoulos
\thanks{This material is based upon work directly supported by the IAS Fund for Math and indirectly supported by the National Science Foundation Grant No. CCF-1900460. Any opinions, findings and conclusions or recommendations expressed in this material are those of the author(s) and do not necessarily reflect the views of the National Science Foundation. This work is also supported by the National Science Foundation Grant No. CCF-1815328.} \\ 
Institute for Advanced Study\\
{\small fotios@ias.edu}
}

\date{\empty}

\maketitle

\begin{abstract}
We consider the task of designing Local Computation Algorithms (LCA) for applications of the Lov\'{a}sz Local Lemma (LLL). LCA is a class of sublinear algorithms proposed by Rubinfeld et al.~\cite{Ronitt} that have received a lot of attention in  recent years. The LLL is an existential, sufficient condition for a collection of sets to have non-empty intersection (in applications, often, each set comprises all objects having a certain property). The ground-breaking algorithm of Moser and Tardos~\cite{MT} made the LLL fully constructive, following earlier results by Beck~\cite{beck_lll} and Alon~\cite{alon_lll} giving algorithms under significantly stronger LLL-like conditions. LCAs under those stronger conditions were given in~\cite{Ronitt}, where it was asked if the Moser-Tardos algorithm can be used to design LCAs under the standard LLL condition. The main contribution of this paper is to answer this question affirmatively. In fact, our techniques yield LCAs for settings beyond the standard LLL condition.
\end{abstract}

\thispagestyle{empty}

\newpage
\setcounter{page}{1}\maketitle

\newpage

\section{Introduction}


The Lov\'{a}sz Local Lemma (LLL)~\cite{LLL} is a powerful tool of probabilistic combinatorics for establishing the existence of objects satisfying certain properties (constraints). As a probability statement, it asserts that given a family of ``bad'' events, if each bad event is individually not very likely and, in addition, is independent of all but a small number of other bad events, then the probability of avoiding all bad events is strictly positive. Given a collection of constraints, one uses the LLL to prove the existence of an object satisfying all of them (a perfect object) by considering, for example, the uniform measure on all candidate objects and defining one bad event for each constraint (containing all candidate objects that violate the constraint).  
Making the LLL constructive was the subject of intensive research for over two decades, during which several constructive versions were developed~\cite{beck_lll,alon_lll,mike_stoc,Czumaj_lll,aravind_08}, but always under conditions stronger than those of the LLL. In a breakthrough work~\cite{Moser,MT},  Moser and Tardos made the LLL constructive for any \emph{product} probability measure (over explicitly presented variables). Specifically, they proved that whenever the LLL condition holds, their {\sl Resample} algorithm, which repeatedly selects \emph{any} occurring bad event and resamples all its variables  according to the  measure, quickly converges to a perfect object.

In this paper we consider the task of designing \emph{Local Computation Algorithms} (LCA) for applications of the LLL. This is a class of sublinear algorithms proposed by Rubinfeld et al. in~\cite{Ronitt} that has received a lot of attention in the recent years~\cite{alon2012space,ghaffari2017complexity, hassidim2016local,sparse_spanning,spanners, aviad,mansour2013local, reingold2016new}. For an instance $F$, a local computation algorithm should answer in an online fashion, for any index $i$, the $i$-th bit of one of the possibly many solutions of $F$, so that the answers given are consistent with some specific solution of $F$. As an example, given a  constraint satisfaction problem and a sequence of queries corresponding to variables of the problem, the algorithm should output a value assignment for each queried variable that agrees with some full assignment satisfying all constraints (assuming one exists).

The motivation behind the study of LCAs becomes apparent in the context of computations on massive data sets. In such a setting, inputs to and outputs from algorithms may be too large to handle within an acceptable amount of time. On the other hand, oftentimes only small portions of the output are required at any point in time by any specific user, in which case the use of a local computation algorithm is appropriate. We also note that LCAs can be seen as a generalization of several models such as local algorithms~\cite{suomela2013survey}, locally decodable codes~\cite{yekhanin2012locally} and local reconstruction algorithms e.g.,~\cite{AilonCCSL08,BhattacharyyaGJJRW12, BlumLR90,JhaR11,SacksS10}.

The algorithm we propose is simple and essentially  corresponds to running the Moser-Tardos algorithm with a specific strategy for choosing which occurring bad event to resample. As an example, assume we are given a constraint satisfaction problem and a set of queries (variables) $x_1, x_2, \ldots, x_q$. In this case, the algorithm first finds a satisfying assignment for the instance induced by the constraints within distance $r$ of $x_1$ in the dependency graph, and then outputs the current value of $x_1$. Then it considers variable $x_2$ and the instance of constraints within distance $r$ of it, then $x_3$ and so on and so forth. Our key observation is that if the constraints within a ball of radius $r$  around variable $x$ are all satisfied after some  step of the execution of the Moser-Tardos algorithm, then the probability that the algorithm needs to resample $x$ in some subsequent step is exponentially small in $r$. We use this fact to show that if the LLL condition is satisfied, then we can choose  $r$ appropriately to get a sublinear time algorithm that makes no errors with high probability.

\subsection{Related work in local computation algorithms}
The original paper of Rubinfeld et al.~\cite{Ronitt} as well as the follow-up work of Alon et al.~\cite{alon2012space} provide LCAs for several problems, including applications of the LLL to $k$-SAT and hypergraph $2$-coloring. The LCAs for LLL applications given in these works, though, are based on the earlier constructive versions of the LLL by Beck~\cite{beck_lll} and by Alon~\cite{alon_lll}, thus requiring significantly stronger conditions than the (standard) LLL condition. Indeed, it was left as a major open question in~\cite{Ronitt} whether the Moser-Tardos algorithm  can be used to design LCAs under the LLL condition. (Note also that, besides requiring stronger conditions, the algorithms of~\cite{beck_lll,alon_lll} are relatively involved compared to the Moser-Tardos algorithm.) We further discuss how our algorithm compares to the ones of~\cite{Ronitt,alon2012space}  in Section~\ref{Applications_SAT}.

Moreover, there is a recent line of research on LLL in the distributed $\mathsf{LOCAL}$ model~\cite{chang2018complexity,distributed, fischer2017sublogarithmic, ghaffari2018derandomizing} that often imply the existence of   LCAs for various problems. However, these works also require stronger conditions than the standard LLL condition and the resulting LCAs are significantly more sophisticated than the algorithm we propose in this paper.


\subsection{Our contributions}

Our main contribution is to make the LLL \emph{locally constructive}, i.e.,  to give a LCA under the LLL condition. Our techniques actually yield a LCA under more general recent conditions for the success of stochastic local search algorithms~\cite{AIJACM,AIK,HV} 
that go beyond the variable setting of Moser and Tardos. For simplicity of exposition, though, we focus our presentation on the variable setting of Moser and Tardos, as it captures the great majority of LLL applications, and discuss the more general settings later. That is, we focus on constraint satisfaction problems  $(\mathcal{X}, \mathcal{C})$, where $\mathcal{X}$ is a set of variables and $\mathcal{C}$ is a set of constraints over these variables. Given a product measure $\mu$ over $\mathcal{X}$, the LLL condition is said to be satisfied with $\epsilon$-slack for the family of bad events induced by $\mathcal{C}$, if the ``badness'' of each bad event is bounded by $1-\epsilon$ (see Section~\ref{LLL_sec}). Given an instance $(\mathcal{X}, \mathcal{C})$, we assume that each constraint entails at most $k = O( \mathrm{polylog}|\mathcal{X}|) $ variables, and each variable is entailed by at most $d = O( \mathrm{polylog}|\mathcal{X}|) $ constraints. Finally, a $(t,s,\delta)$-LCA responds to each query in time $t$,  using memory $s$, and makes no error with probability at least $1-\delta$. An informal version of our main result can thus be stated as follows. 
\begin{theorem}[Informal Statement]\label{main_intro}
If $( \mathcal{X}, \mathcal{C}, \mu)$  satisfies the LLL conditions with $\epsilon$-slack, then there exists an $( n^{\beta}, O(n) , n^{-\gamma}  )$-LCA for $(\mathcal{X}, \mathcal{C})$, for every $\beta, \gamma >0$  such that $ (1  + \gamma)/\beta < \log(1/(1-\epsilon))/ \log (k d)$.
\end{theorem}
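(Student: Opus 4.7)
The plan is to derandomize the Moser--Tardos \emph{Resample} algorithm so that, given a fixed random tape $R$, it produces a specific perfect assignment $\sigma(R)$, and then to answer each query on a variable $v$ by a local simulation that determines $\sigma(R)(v)$ while touching only a small neighborhood of $v$. Concretely, I let $R$ be a table containing, for every constraint $C \in \mathcal{C}$ and every $j \geq 1$, an independent sample $R[C,j]$ from $\mu$ restricted to $\mathrm{vbl}(C)$; the global algorithm is \emph{Resample} equipped with the rule ``at each step, resample the lowest-indexed currently violated constraint, using its next unused block of $R$'', and with initial assignment also drawn from $R$. Under the $\epsilon$-slack LLL condition this run terminates in polynomial time almost surely, and since all queries consult the same $R$ and the same deterministic rule, the outputs are automatically consistent with a single global execution.

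To answer a query on $v$, the LCA runs a recursive routine $\textsc{Eval}(v,t)$ that returns the value of $v$ at time $t$ in the global run. To evaluate $\textsc{Eval}(v,t)$ one must locate the last time $t' \leq t$ at which some constraint $C \ni v$ was resampled; to decide which $C \ni v$ were resampled at times $\leq t$, one must, for each candidate $C$ and each prior time $t''$, determine whether $C$ was violated by calling $\textsc{Eval}(u, t'')$ for every $u \in \mathrm{vbl}(C)$. The resulting exploration is a tree with branching $O(kd)$ per level; all state can be stored in $O(m)$ memory, with the randomness $R$ generated on the fly from an $O(m)$-space seeded hash.

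For the running-time and error analysis I would invoke a local adaptation of the Moser--Tardos witness-tree argument. Each completed subtree of $\textsc{Eval}$ rooted at a constraint $C$ with $s$ nodes can be charged to an actual witness tree of size $\Omega(s)$ in the trace of the global Resample. Under the $\epsilon$-slack LLL, the probability that any fixed witness tree of size $s$ rooted at $C$ appears is at most $\exp(-\Omega(\epsilon s))$, while the number of such tree shapes is at most $(C_0 \cdot kd)^s$ for an absolute constant $C_0$. Union-bounding over the $m \leq nd$ possible roots, no query-relevant witness tree exceeds size $s^\star = \Theta((1+\gamma)\log n / \epsilon)$ with probability at least $1 - n^{-\gamma}$. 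Conditioned on this event, each $\textsc{Eval}$ tree has at most $(kd)^{O(s^\star)}$ nodes, which is at most $n^\beta$ precisely because $(1+\gamma)/\beta < \epsilon/\log(kd)$.

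The main obstacle is that the natural ``query witness tree'' is richer than the classical Moser--Tardos witness tree: to decide whether $C$ was ever violated, the simulation must, in principle, trace the history of every variable of $C$ and not only of the queried one, so a naive recursion could explode. Controlling this blow-up requires an augmented witness-tree construction that charges each non-trivial exploration branch to a distinct resample in the global trace, and does so without losing the per-node $(1-\epsilon)$-slack factor that drives the exponential decay. A second subtlety is that the classical Moser--Tardos analysis is independent of the scheduling rule only for \emph{termination}; the query complexity does depend on the rule, so the priority ordering used by the global algorithm and the definition of ``relevant'' branches in $\textsc{Eval}$ must be co-designed with the witness-tree accounting. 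Verifying that this co-design gives the claimed trade-off between $\beta$, $\gamma$ and $\epsilon$ is where I expect the real work to lie.
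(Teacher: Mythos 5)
Your proposal takes a genuinely different route from the paper's, and the difficulty you flag at the end is real and unresolved---so as written this is not yet a proof.

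You propose to fix a global random tape $R$, declare a canonical scheduling rule (lowest-index violated constraint), and then answer a query on $v$ by a recursive local routine $\textsc{Eval}(v,t)$ that reconstructs the history of $v$ in that one fixed global run. This is the classic ``simulate a fixed global execution locally'' paradigm. The problem is exactly the one you name: to decide whether a constraint $C$ was violated at time $t''$ you need the values of \emph{all} of $\mathrm{vbl}(C)$ at time $t''$, which fans out the recursion in a way that the Moser--Tardos witness tree does not control. A witness tree records one causal chain ending at a resampling of $C$; your $\textsc{Eval}$ tree must branch on every variable whose value is needed to \emph{certify non-violation} as well, and those branches are not resamplings, so they carry no $(1-\epsilon)$ factor. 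Your sentence ``charges each non-trivial exploration branch to a distinct resample in the global trace, and does so without losing the per-node $(1-\epsilon)$-slack factor'' is precisely the missing lemma, and you offer no construction for it. Under the much stronger Beck/Alon-type conditions this kind of local simulation can be made to work (and this is what the earlier LCAs of Rubinfeld et al.\ and Alon et al.\ do), but under the bare $\epsilon$-slack LLL condition it is not known how to bound such an evaluation tree, which is exactly why the question answered by this paper was open.

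The paper avoids the problem by not committing to a fixed global run at all. Upon a query $x_i$, it resamples the fresh variables in the ball $\mathcal{I}(x_i,r)$, runs Depth-First MT \emph{inside the ball} until all constraints there are satisfied, and returns the current value of $x_i$. Because the MT resampling order is arbitrary, this local work can be viewed as a \emph{prefix} of a complete MT execution in which constraints near the queried variables are resampled first; consistency across queries is automatic because later queries continue the same prefix. The error event is then ``the completed global run resamples $x_i$ after we answered,'' and since at that moment every constraint in a ball of radius $r$ around $x_i$ is satisfied, any later resampling of a constraint containing $x_i$ has a witness tree of size at least $r$. The witness tree lemma plus the $\epsilon$-slack then gives a bound of $\psi_j(1-\epsilon)^r$ on that event, with no need to control any richer ``evaluation tree.'' In short: you let each query \emph{shape} the global schedule rather than \emph{read} a predetermined one, and this is what makes the accounting close.
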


Theorem~\ref{main_intro} gives a trade-off between the running time (per query) and the probability of error, while establishing that both decrease with the slack in the LLL conditions. Moreover, as we will see, if we know beforehand  the total number of queries to our algorithm, then the condition of Theorem~\ref{main_intro} can be significantly improved. (We stress that the latter is a feature of our results  which only adds flexibility to the original definition of LCAs and  does not impose any restrictions, as the user can always choose to not introduce a limitation on the number of queries. However, when dealing with large instances such limitations are natural and/or even unavoidable.)
 
 Using our general results we design LCAs for the following problems, chosen to highlight different features of our results. As we will see formally in Section~\ref{LCA_prel}, our results  apply to constraint satisfaction problems of large size, i.e., we assume that the number of variables is sufficiently large. This  mild assumption is essentially inherent in the model of local computation algorithms.

\subsubsection{$k$-SAT}\label{Applications_SAT}

Gebauer, Szab\'{o} and Tardos~\cite{2ke} used the  LLL to prove that any $k$-CNF formula where every variable appears in at most $d$ clauses is satisfiable if $d   (k+1)  \le  2^{k+1}/\mathrm{ e}$ and, moreover, that this is asymptotically tight in $k$. We show the following.
\begin{theorem}\label{SAT}
Let $\phi$ be a $k$-CNF formula on $n$ variables with $m$ clauses where every variable appears in at most $d$ clauses. 
\begin{enumerate}[(a)]

\item Suppose that  $\left[ d (k+1) \right]^{1+\eta}  \le 2^{k+1}/ \mathrm{e}$, for some constant $\eta >0$. For every  $\alpha, \beta, \gamma > 0 $ such that $ (\alpha + \gamma)/\beta < \eta$,  there exists a $(n^{\beta}, O(n^{ \min \{1, \alpha  +\beta \}}   ), n^{-\gamma})$-LCA for $\phi$ that answers up to $n^\alpha$ queries.

\item Suppose that  $d (k+1) \le (1-\epsilon) 2^{k+1}/\mathrm{e}$, for some constant $\epsilon >0$. Then, for every  $\beta , c > 0$, there exists a $( n^{\beta}, n^{\beta} \log^{c}(n), \log^{-c}(n) )$-LCA for $\phi$ that answers up to $\log^c (n)$ queries.

\end{enumerate}
\end{theorem}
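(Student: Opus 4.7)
The plan is to reduce both parts of Theorem~\ref{SAT} to Theorem~\ref{main_intro} (with the query-bounded refinement hinted at in the paragraph following its statement), by recasting $k$-SAT in the Moser--Tardos variable--constraint framework. The measure $\mu$ is the uniform product measure on $\{0,1\}^n$, so that each clause $C$ induces a bad event of probability exactly $2^{-k}$. Each clause shares a variable with at most $k(d-1)$ others, so the dependency degree $D$ of the bad-event collection satisfies $D+1\le k(d-1)+1\le d(k+1)$. Since both $k$ and $d$ are $O(\mathrm{polylog}\,n)$ in the regime covered by parts (a) and (b), the structural hypotheses of Theorem~\ref{main_intro} hold and the problem is a CSP of the kind that theorem applies to.

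The second step is to translate the sparsity assumption into an $\epsilon$-slack of the appropriate size. Interpreting $\epsilon$ as a logarithmic slack parameter (consistent with the fact that the conclusion of Theorem~\ref{main_intro} involves $\epsilon/\log(kd)$, which may exceed $1$), the asymmetric Gebauer--Szab\'o--Tardos~\cite{2ke} computation shows that whenever $d(k+1)\le c\cdot 2^{k+1}/\mathrm{e}$ the LLL condition holds with slack of order $\log(1/c)$. For part~(b), $c=1-\epsilon$ yields a constant slack, hence $\epsilon/\log(kd)=\Omega(1/\mathrm{polylog}\,n)$ which is in particular bounded below uniformly in $n$. For part~(a), the hypothesis is equivalent to $d(k+1)\le [d(k+1)]^{-\eta}\cdot 2^{k+1}/\mathrm{e}$, so $c=[d(k+1)]^{-\eta}$ and the slack is of order $\eta\log(d(k+1))$, giving the crucial ratio $\epsilon/\log(kd)\ge \eta(1-o(1))$.

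The third and final step is to invoke the query-bounded refinement of Theorem~\ref{main_intro}. A union bound over $Q$ independent executions of the underlying local procedure shows that per-query error $\delta_0$ gives total error at most $Q\delta_0$, so answering $Q=n^\alpha$ queries with total error $n^{-\gamma}$ requires per-query error $n^{-(\alpha+\gamma)}$; applying Theorem~\ref{main_intro} with $\gamma$ replaced by $\alpha+\gamma$ gives the condition $(\alpha+\gamma)/\beta<\epsilon/\log(kd)$, which by step~2 is $(\alpha+\gamma)/\beta<\eta$, as claimed in part (a). For part~(b) with $Q=\log^c n$ and target total error $\log^{-c}(n)$, the per-query error budget is $\log^{-2c}(n)$, and the analogue of $(\alpha+\gamma)/\beta$ is of order $(\log\log n/\log n)/\beta$, which tends to $0$; for any fixed $\beta\in(0,1]$ and $c>0$ the inequality therefore holds for large enough $n$, using the constant lower bound on $\epsilon/\log(kd)$ from step~2.

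I expect the main obstacle to be step~2: making the slack translation rigorous enough that the constants line up and part~(a) yields the clean inequality $(\alpha+\gamma)/\beta<\eta$ rather than a weaker condition with lower-order corrections. This requires the asymmetric variant of Theorem~\ref{main_intro} (with non-uniform weights $x_C$ on clauses as in~\cite{2ke}) rather than its symmetric form, and one must verify that the refinement supporting sub-linearly many queries is compatible with that asymmetric analysis. A secondary, more mechanical issue is confirming that the randomness used for distinct queries in the LCA of Theorem~\ref{main_intro} is indeed independent, so that the union bound in step~3 is legitimate; this should follow from the construction underlying the main theorem but must be explicitly verified.
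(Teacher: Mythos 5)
Your overall reduction---cast $k$-SAT as a CSP, convert the $[d(k+1)]^{1+\eta}\le 2^{k+1}/\mathrm{e}$ hypothesis into an $\epsilon$-slack, then feed the parameters into the main theorem---is exactly the route the paper takes, and your slack computation in step~2 (identifying $1-\epsilon$ with $[d(k+1)]^{-\eta}$ and hence $\zeta=\log(1/(1-\epsilon))/\log(kd)\ge\eta$) matches the paper's Equation~\eqref{basic_yo} after setting $\eta=\zeta$. The issues lie in the setup, not the skeleton.

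The uniform product measure does not yield the Gebauer--Szab\'o--Tardos threshold, so your step~1 as written would not close. Under the uniform measure on $\{0,1\}^n$ the symmetric LLL only gets you to roughly $k(d-1)+1\le 2^k/\mathrm{e}$; the extra factor of $2$ in $2^{k+1}/\mathrm{e}$ comes precisely from biasing the measure. The paper, following~\cite{2ke}, sets each variable $x_i$ to \emph{true} with probability $\tfrac12+\frac{2(1-\theta_i)d_i-d}{2dk}$, and uses the \emph{uniform} LLL weights $\psi_j=\mathrm{e}/(2^k-\mathrm{e})$ for every clause. You have the asymmetry exactly backwards: in your closing paragraph you speculate that one needs ``non-uniform weights $x_C$ on clauses,'' but the clause weights are constant; it is the per-variable sampling bias that is non-uniform. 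This is not a cosmetic point, since the quantity $\lambda$ in Theorem~\ref{main} (controlling $\max_i\psi_i=O(n^\lambda)$) is $0$ only because the $\psi_j$ are constant, and that is what makes the final inequality come out as the clean $(\alpha+\gamma)/\beta<\eta$ rather than something with a $\lambda$ correction.

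Your step~3 also re-derives by hand something the paper already provides. Theorem~\ref{main} is stated with the query-count exponent $\alpha$ built in ($\beta\zeta>\alpha+\gamma+\lambda$), and its proof handles all $q=n^\alpha$ queries inside a \emph{single} coupled Moser--Tardos execution: Lemma~\ref{error_prob_radius} union-bounds over the events $E_{i,j}$ (constraint $c_j\ni x_i$ resampled after query $i$ is answered), which are events in one run, not across independent runs. Your worry about whether the per-query randomness is independent is beside the point---union bounds do not require independence---but the framing of the algorithm as $Q$ independent executions is wrong as a description of what the algorithm does, since state (the set $S$ of already-resampled variables) is explicitly carried across queries in the ``Respond to Queries'' procedure. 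Applying Theorem~\ref{main} directly, with $\alpha$ equal to the query exponent, both sidesteps the issue and is what the paper intends.
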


For comparison, the work of Rubinfeld et al.~\cite{Ronitt} gave a LCA for $k$-CNF formulas only when there exist $k_1, k_2, k_3$ such that $k_1 + k_2 +k_3 =  k$ and
\begin{eqnarray*}
8d (d-1)^3 (d+1) & < & 2^{k_1} \\
8d (d-1)^3 (d+1) & < & 2^{k_2}  \\
\mathrm{e}(d+1) & < & 2^{k_3}\enspace.
\end{eqnarray*}
Notably, the LCA of~\cite{Ronitt} is logarithmic in time and space~\cite{alon2012space}. Unfortunately, the techniques  of~\cite{alon2012space} that allow for space-efficient local algorithms are tailored to the LLL-algorithm of Alon~\cite{alon_lll} and do not appear to be compatible with our results.

More specifically, Alon et al.~\cite{alon2012space} are able to exploit a technique introduced in~\cite{nguyen2008constant} that considers a random permutation of the input and feeds it to the algorithm in that order. In this way, they can use a pseudo-random generator in order to encode that permutation using logarithmic space. However, the successful application of this technique crucially relies on the fact that the algorithm in~\cite{alon_lll}, which is being simulated, can afford to sample each variable exactly once during the execution. (An additional  assumption, which we do not make in this paper, is that  each variable should be contained in a constant number of clauses.)  On the contrary, the Moser-Tardos algorithm works for the more general LLL conditions at the expense of the aforementioned property, which no longer holds. That is, it needs an explicit assignment of all variables at every point during the execution in order to know the set of currently violated clauses, while the algorithm in~\cite{alon_lll} can work only with partial value assignments until the very end of its execution, since each variable is assigned a value once. Therefore, a simple permutation of the input cannot capture the entire resampling sequence of the Moser-Tardos algorithm, which potentially involves multiple resamplings of each variable. Also, the constraint that only violated clauses are resampled makes certain resampling sequences invalid, and this even depends on the values sampled so far at any point of the execution, which is not the case in Alon's algorithm~\cite{alon_lll}.

\subsubsection{Coloring Graphs}\label{Applications_Coloring}

In graph vertex coloring one is given a graph $G(V,E)$ and the goal is to find a mapping of $V$ to a set of $q$ colors so that no edge in $E$ is monochromatic. The \emph{chromatic number}, $\chi(G)$, of $G$ is the smallest integer for which this is possible. Trivially, if the maximum degree of $G$ is $\Delta$, then $\chi(G) \le \Delta+1$. Molloy and Reed~\cite{molloy1997bound} proved that this can be significantly improved for graphs where the neighborhood of every vertex is bounded away from being a clique.

\begin{theorem}[\cite{molloy1997bound}]\label{seq_theorem}
There exists $\Delta_0$ such that if $G$ has maximum degree $\Delta > \Delta_0$ and the neighborhood of every vertex of $G$ contains at most $\binom{\Delta}{2} - B$ edges, where $B \ge \Delta  \log^4 \Delta$, then $\chi(G) \le \Delta +1 - B/(\mathrm{e}^6 \Delta)$.
\end{theorem}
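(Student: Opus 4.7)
The plan is to follow the classical Molloy--Reed two-phase approach: first produce a partial proper coloring via the LLL that uses only $q = \Delta + 1 - \lceil B/(\mathrm{e}^6\Delta)\rceil$ colors and leaves a well-behaved uncolored set, then complete it greedily via a list-coloring argument.

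First I would assign each vertex $v$ an independent uniformly random color in $[q]$ and \emph{uncolor} any vertex whose color agrees with that of some neighbor. For each $v$, let $L_v$ denote the set of colors not appearing on any retained neighbor of $v$, and let $U_v$ be the number of uncolored neighbors of $v$; the phase-one goal is $|L_v| \geq U_v + 1$ for every $v$, after which a greedy Phase-2 argument (process the uncolored vertices in any order, picking for each $v$ any color in $L_v$ not yet used by one of its already-recolored uncolored neighbors) completes a proper coloring. The key identity is that $|L_v| - U_v$ equals $q - \Delta$ plus a ``savings'' term $S_v$ that counts colors appearing on two or more non-adjacent retained neighbors of $v$. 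For each non-edge $\{u,w\} \subseteq N(v)$, the probability that $u$ and $w$ are both retained and share a color is at least $(1/q)(1-1/q)^{2\Delta - 3} = \Omega(1/q)$, so the hypothesis that $N(v)$ has at least $B$ non-edges gives $\mathbb{E}[S_v] = \Omega(B/q)$; tracking the constants carefully produces the $\mathrm{e}^{-6}$ savings factor claimed in the theorem.

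Next I would invoke Talagrand's concentration inequality to show that $|L_v| - U_v$ deviates from its mean by more than $B/(2\mathrm{e}^6\Delta)$ with probability at most $\exp(-\Omega(\log^2 \Delta))$; the hypothesis $B \geq \Delta \log^4 \Delta$ is calibrated precisely to make this tail beat the dependency degree appearing below. Define $\mathcal{B}_v$ to be the concentration-failure event at $v$. Since $\mathcal{B}_v$ depends only on the random colors of vertices within graph-distance $2$ of $v$, it is mutually independent of all but $\mathrm{poly}(\Delta)$ of the other $\mathcal{B}_u$'s; the symmetric LLL therefore yields positive probability that none of the $\mathcal{B}_v$ occur, producing the required partial coloring.

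The main obstacle is the Talagrand step: both $|L_v|$ and $U_v$ are nonlinear functions of $\mathrm{poly}(\Delta)$ random colors, and verifying the ``Lipschitz plus certifiability'' hypotheses of Talagrand's inequality requires a careful choice of certificates, especially for $|L_v|$ (for each missing color one must exhibit a small set of vertices whose colors witness its absence, and the interaction with the retention variable makes this bookkeeping delicate). A secondary subtlety is that the expected-savings calculation requires an inclusion--exclusion correction to subtract overcounting from colors shared by three or more non-adjacent neighbors, but these higher-order contributions are $O(B/q^2)$ and negligible in the relevant regime.
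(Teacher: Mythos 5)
Your proposal follows essentially the same Molloy--Reed two-phase strategy that the paper recaps in Section~6.2: naive uniformly random coloring followed by uncoloring conflicts, a savings term counting colors repeated on non-adjacent retained neighbors (your $S_v$ is the paper's stable-color count $X_v$, and your identity $|L_v|-U_v = q-\Delta+S_v$ is exactly the bookkeeping that makes $\ex[X_v]\ge 2Z$ suffice), Talagrand-type concentration, and the symmetric LLL over per-vertex failure events depending only on colors within distance~$2$. The paper cites this theorem from Molloy and Reed rather than reproving it, and your sketch matches its background exposition faithfully; the technical work you flag as remaining (Talagrand certifiability for $|L_v|$ and $U_v$, and tracking the $\mathrm{e}^{-6}$ constant) is precisely what is deferred to the cited reference.
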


Theorem~\ref{seq_theorem} is a sophisticated application of the LLL. Our results imply local algorithms for finding the colorings promised by Theorem~\ref{seq_theorem} that exhibit no trade-off between speed and accuracy, in the sense that for large enough $n$ both constants $\beta, \gamma$, below, can be made arbitrarily small.

\begin{theorem}\label{lca_theorem}
 Let $G$ be any graph on $n$ vertices, $m$ edges, and maximum degree $\Delta$  satisfying the conditions of Theorem~\ref{seq_theorem}. For every 
$\beta, \gamma >0 $ there exists a $(n^{\beta}, O(n) , n^{-\gamma})$-local algorithm for coloring $G$ using $\Delta +1 - B/(\mathrm{e}^6 \Delta)$ colors.
\end{theorem}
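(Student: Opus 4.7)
The plan is to reduce Theorem~\ref{lca_theorem} directly to Theorem~\ref{main_intro} by casting the Molloy-Reed coloring result as an LLL application in the Moser-Tardos variable framework. To that end, associate with each vertex $v \in V(G)$ a variable $X_v$ drawn uniformly from the palette $[q]$, where $q = \Delta + 1 - B/(e^6\Delta)$. The bad events are those that underlie the proof of Theorem~\ref{seq_theorem}: for each vertex $v$, a constant number of bad events capture local obstructions to the shift-based procedure that turns a uniformly random coloring into a proper one (excess of repeated colors in $N(v)$, insufficient ``uncoloring slack'' around $v$, a monochromatic edge that cannot be locally repaired, and so on). Each such event is determined by the colors inside a bounded-radius ball around $v$ in $G$, hence depends on $k = \mathrm{poly}(\Delta)$ variables, and each $X_v$ participates in at most $d = \mathrm{poly}(\Delta)$ bad events.

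The crucial input from Molloy and Reed is that, via Talagrand's inequality, every bad event has probability at most $\exp\bigl(-\Omega(\log^2 \Delta)\bigr)$, whereas the dependency degree is only $\mathrm{poly}(\Delta)$. Consequently, the LLL condition is satisfied with an arbitrarily large $\epsilon$-slack: in the power-slack formulation (consistent with the hypothesis $[d(k+1)]^{1+\eta} \le 2^{k+1}/e$ appearing in Theorem~\ref{SAT}(a)), one may take $\epsilon$ as large as any prescribed constant (indeed, as large as $\Omega(\log \Delta)$) once $\Delta$ is sufficiently large, simply because $-\log\!\bigl(\Pr[A]\cdot d^{O(1)}\bigr) = \Omega(\log^2 \Delta)$ swamps the dependency contribution.

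Under the mild large-$n$ assumption inherent to the LCA framework, and with $\Delta \le \mathrm{polylog}(n)$, the locality parameters satisfy $\log(kd) = O(\log\Delta)$. For any fixed $\beta, \gamma \in (0,1]$ one can therefore choose $\epsilon$ large enough so that $(1+\gamma)/\beta < \epsilon/\log(kd)$, fulfilling the hypothesis of Theorem~\ref{main_intro}. Applying Theorem~\ref{main_intro} immediately yields the desired $(n^\beta, O(m), n^{-\gamma})$-LCA for coloring $G$, with no trade-off between $\beta$ and $\gamma$. The absence of trade-off is precisely due to the fact that the Talagrand-based slack grows much faster than $\log(kd)$.

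The main obstacle is cleanly recasting the Molloy-Reed argument as an LLL application compatible with the hypotheses of Theorem~\ref{main_intro}. Specifically, one must identify a set of bad events that (i) are genuinely determined by constant-radius neighborhoods, so that $k$ and $d$ are controlled as claimed, and (ii) admit Talagrand-style concentration in the product-measure setting demanded by the Moser-Tardos framework. Neither step introduces new ideas, but faithfully recovering the quantitative bounds of~\cite{molloy1997bound} in the LLL language, and then packaging them as inputs to Theorem~\ref{main_intro}, requires careful bookkeeping.
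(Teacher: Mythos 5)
Your proposal matches the paper's high-level strategy: both recognize that the Talagrand-based concentration in~\cite{molloy1997bound} gives bad-event probabilities of order $\Delta^{-\Omega(\log\Delta)}$ against dependency degrees that are only $\mathrm{poly}(\Delta)$, so the $\epsilon$-slack (equivalently, the quantity $\zeta$ in Theorem~\ref{main}) grows like $\Omega(\log\Delta)$ and swamps $\log(kd)=\Theta(\log\Delta)$; this is exactly why no $\beta$--$\gamma$ trade-off survives, and it is also why the paper's proof verifies $\beta\zeta>1+\gamma$ with room to spare.

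However, there is a genuine gap in the reduction, and it is not just bookkeeping. The Molloy--Reed bad events $A_v=\{X_v\le Z\}$ (where $X_v$ counts stable colors around $v$) are \emph{not} ``the coloring is improper near $v$.'' A state avoiding every $A_v$ is typically still an \emph{improper} coloring; the theorem only guarantees that after \emph{uncoloring} every vertex in a monochromatic edge, the residual uncolored graph can be completed greedily because each such vertex still has an available color. So applying Theorem~\ref{main_intro} to the CSP whose constraints are the $A_v$ does not ``immediately yield'' an LCA for coloring $G$: it yields an LCA for the phase-one CSP, whose output on a queried vertex $v$ may be a color that gets erased in phase two. You must also (i) be able to detect locally whether $v$ survives the uncoloring step, and (ii) locally simulate the greedy completion for vertices that do not. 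The paper handles (i) by putting all vertices within distance $2$ of $v$ into $v$'s constraint, so the LCA's phase-one answer already determines whether $v$ lies in a monochromatic edge; and it handles (ii) by processing the greedy phase in the order in which queries arrive, recording the colors it hands out so later queries on neighbors see them as unavailable. Your sketch gestures at obstructions to the ``shift-based procedure'' as if they could be folded into the bad events, but Molloy--Reed's concentration argument is precisely about the quantity $X_v$ controlling phase two, not about phase two succeeding locally; redefining the bad events to make the output directly proper would break the Talagrand bound you rely on. Without the distance-$2$ constraint design and the query-ordered greedy simulation, the proposal does not produce a consistent coloring.

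A smaller point: the ``power-slack'' reformulation $[d(k+1)]^{1+\eta}\le 2^{k+1}/e$ you invoke is specific to the $k$-SAT calculation in Theorem~\ref{SAT}(a); for the coloring application the paper works directly with the multiplicative $\epsilon$-slack and the explicit $\zeta=\log(1/(1-\epsilon))/\log(kd)$. The arithmetic conclusion is the same, but you should carry it out in the form Theorem~\ref{main} actually requires.
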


\subsubsection{Non-Uniform Hypergraph Coloring}

Our results can also handle applications of the LLL in non-uniform settings, i.e., where the probabilities of bad events may vary significantly. For example, it is known that a hypergraph  $\mathcal{H}$ with minimum edge size at least $3$ where every vertex lies in at most $\Delta_i$  edges of size $i$ is $2$-colorable, if $\sum_i \Delta_i 2^{-i/2 } \le  \frac{1}{ 6 \sqrt{2} }$ (see Theorem 19.2 in~\cite{mike_book}).

Using our main theorem  we can design a local algorithm for this problem when the number of queries is polylogarithmic. (Our  main result,  as well as extensions of the techniques in~\cite{Ronitt}, can be applied to give local algorithms with no restriction on the number of queries, but under significantly stronger assumptions for the $\Delta_i$. In particular, in these cases the fact that constraints corresponding to large hyperedges are ``easier'' to fix cannot be captured.) 


\begin{theorem}\label{non_uniform_hyper}
Fix $\epsilon > 0$ arbitarily small and $D>0$ arbitrarily large. Let $\mathcal{H}_{\epsilon, D}$ be the set of hypergraphs with minimum edge size at least $3$, where each vertex lies in at most $\Delta_i \le D$ edges of size $i$ such that  
\begin{align}\label{cond}
\sum_{i \ge 3} \Delta_i 2^{-i/2 } \le  \frac{1 - \epsilon}{ 6 \sqrt{2} } \enspace .
\end{align}
For every $\beta , c > 0$  there exists a $(  n^{\beta} , O(n^{\beta } \log^{c} n ), \log^{-c} n )$-LCA for $2$-coloring hypergraphs in $\mathcal{H}_{\epsilon, D}$ that answers up to $\log^c (n)$ queries.
 \end{theorem}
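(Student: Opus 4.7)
The plan is to cast the problem as a non-uniform instance of the variable-setting LLL, to verify that (\ref{cond}) yields a positive slack in the asymmetric LLL condition, and then to invoke the polylogarithmic-query refinement of Theorem~\ref{main_intro} that is alluded to immediately after that theorem's statement.

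\textbf{LLL setup.} Sample each vertex $v$'s color $X_v \in \{0,1\}$ independently and uniformly, and introduce one bad event $A_e$ per hyperedge $e$, demanding that $e$ be monochromatic; then $\Pr[A_e] = 2^{1-|e|}$, and $A_e, A_{e'}$ are dependent exactly when $e \cap e' \neq \emptyset$. Using the asymmetric LLL with weights $x_e = C \cdot 2^{-|e|/2}$ for an appropriate constant $C = C(\epsilon) > 0$, I would verify that (\ref{cond}) implies
\[
\Pr[A_e] \le (1-\epsilon')\, x_e \prod_{e' \sim e}(1-x_{e'})
\]
for every $e$, with some $\epsilon' = \epsilon'(\epsilon) > 0$. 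This is the standard calculation underlying Theorem 19.2 in~\cite{mike_book}: partition the neighbors of $e$ by which vertex of $e$ they share, bound $\sum_{e' \ni v} x_{e'} \le C \sum_i \Delta_i 2^{-i/2} \le C(1-\epsilon)/(6\sqrt{2})$, and control $-\log(1-x_{e'})$ by $x_{e'}+x_{e'}^2$.

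\textbf{Applying the LCA.} The assumptions $\Delta_i \le D$ together with (\ref{cond}) bound the local structure of the dependency graph: every vertex lies in $O_{\epsilon,D}(1)$ edges of any fixed size, and sizes contribute geometrically decreasingly to the relevant sums. With the $\epsilon'$-slack in hand, I would invoke the polylogarithmic-query refinement of Theorem~\ref{main_intro}, which trades the full $(n^\beta, O(m), n^{-\gamma})$-guarantee for a weaker $(n^\beta, O(m), \log^{-c} n)$-guarantee handling up to $\log^c n$ queries, in return for needing only a constant slack rather than an $\Omega(1/\log(kd))$ slack. Plugging in the $\epsilon'$-slack yields the claimed LCA for every $\beta \in (0,1]$ and every $c > 0$.

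\textbf{Main obstacle.} The chief difficulty is that the bad events are genuinely non-uniform: both $\Pr[A_e]$ and the scope $|e|$ vary, and $|e|$ is not uniformly bounded by anything polylogarithmic in $n$ (nothing in (\ref{cond}) caps $|e|$). I expect the LCA analysis to nonetheless go through because the weights $x_e = C \cdot 2^{-|e|/2}$ decay geometrically, so large-scope events contribute only a vanishing amount to the witness-tree bound that drives the error probability; truncating the witness-tree exploration at a polylogarithmic depth chosen in terms of $\epsilon'$ and $c$ should keep both the per-query running time within $n^\beta$ and the failure probability within $\log^{-c} n$. Making this truncation rigorous in the asymmetric setting, without a global bound on the edge size $k$, and cleanly reducing the non-uniform statement to the polylog-query refinement of Theorem~\ref{main_intro}, is where I expect the bulk of the technical work.
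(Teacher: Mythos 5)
Your proposal follows the paper's proof essentially exactly: the same product space (uniform 2-coloring), the same bad events $A_e$ with $\Pr[A_e]=2^{1-|e|}$, the same weights $x_e \propto 2^{-|e|/2}$ (the paper sets $x_e=(1/2)^{(|e|-1)/2}$, i.e.\ your constant is $C=\sqrt{2}$), the same citation to Theorem~19.2 of~\cite{mike_book} for the LLL verification, and the same reduction to the polylogarithmic-query refinement (Remark~\ref{restricted_version} in the paper). The observation that multiplying the right-hand side of~\eqref{cond} by $(1-\epsilon)$ yields an $\epsilon'$-slack is exactly how the paper establishes the slack.

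The one place you diverge is the ``main obstacle'' you raise about $\max_e |e|$ being unbounded. You are right that nothing in~\eqref{cond} caps $|e|$ by itself: a single edge of size $\Theta(n)$ contributes only $2^{-\Theta(n)}$ to the sum. However, the paper does not take the route you propose of truncating witness trees to absorb large scopes. Instead, its proof simply asserts that $\max_e |e|$ and $\max_i\Delta_i$ are constants, so that $k,d=O(1)$ and Theorem~\ref{main} with Remark~\ref{restricted_version} applies directly. In other words, the intended reading is that the class $\mathcal{H}_{\epsilon,D}$ consists of hypergraphs of bounded edge size, and the extra technical machinery you anticipate is unnecessary under that reading. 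If one insists on the literal hypotheses as written, the gap you identified is a latent imprecision in the theorem statement rather than something either the paper's proof or yours must actually surmount; a cleaner fix is to state the bound on $\max_e|e|$ explicitly (e.g.\ as part of what $D$ controls), which then makes your LLL verification sufficient on its own, with no truncation needed.
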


\section{Background}

\subsection{The Lov\'{a}sz Local Lemma}\label{LLL_sec}

To prove that a set of objects $\Omega$ contains at least one element satisfying a collection of constraints, we introduce a probability measure $\mu$ on $\Omega$, thus turning the objects violating each constraint into a bad event.

\begin{gLLL}
Let $(\Omega,\mu)$ be a probability space and $\mathcal{A} = \{A_1, A_2,\ldots,A_m\}$ be a set of $m$ (bad) events. For each $i \in [m]$, let $D(i) \subseteq	 [m] \setminus \{i\}$ be such that $\mu(A_i \mid \cap_{j \in S} \overline{A_j}) = \mu(A_i)$ for every $S \subseteq [m] \setminus (D(i) \cup \{i\})$. If there exist positive real numbers $\{\psi_i\}_{i=1}^m$ such that for all $i \in [m]$,
\begin{equation}\label{eq:LLL}
\frac{\mu(A_i)}{\psi_i}  \sum_{ S \subseteq   D(i) \cup \{i \} }  \prod_{j \in S} \psi_j \le 1 \enspace , 
\end{equation}
then the probability that none of the events in $\mathcal{A}$ occurs is at least $\prod_{i=1}^m 1/(1+\psi_i) > 0$. 
\end{gLLL}
\begin{remark}\label{general}
Condition~\eqref{eq:LLL} above is equivalent to the more well-known form $\mu(A_i) \le x_i \prod_{j \in D(i)} (1-x_j)$, 
where $x_i = \psi_i/(1+\psi_i)$. As we will see, formulation~\eqref{eq:LLL} facilitates refinements. To see the equivalence, notice that since $x_i =0$ is uninteresting, we may assume $x_i \in (0,1)$. Taking $\psi_i > 0$, 
setting $x_i = \psi_i/(1+\psi_i)  \in (0,1)$, and simplifying, the condition becomes $\mu(A_i) \prod_{j \in \{i \} \cup D(i) } (1+\psi_j) \le \psi_i$. Opening up the product yields~\eqref{eq:LLL}. 
\end{remark}

\begin{definition}
We say that the general LLL condition holds with $\epsilon$-slack if the righthand side of~\eqref{eq:LLL} is bounded by $1-\epsilon$ \, for every $i\in [m]$.
\end{definition}

Let  $G$ be the digraph over the vertex set $[m]$  having an arc from each $i \in [m]$ to each element of $D(i)\cup \{i \}$. We call such a graph a \emph{dependency} graph. Therefore, at a high level, the LLL states that if there exists a sparse dependency graph and each bad event is not too likely, then we can avoid all bad events with positive probability.

\subsection{Local Computation Algorithms}\label{LCA_prel}

\begin{definition}\label{ougada}
For any input $x$, define the set $F(x) = \{ y : y \text{ is a valid solution for input }x  \}$. The \emph{search} problem, given $x$, is to find any $y \in F(x)$. We use $\ell = |x|$ to denote the length of the input.

\end{definition}

Our definition of LCA algorithms is almost identical to the one of~\cite{Ronitt}, the only difference being that it is more flexible in the sense that it also takes as a parameter the  number of queries to the algorithm. 
\begin{localalgo}
Let $F(x)$ be as in Definition~\ref{ougada}. A $(
q
, t
, s
, \delta
)
$-local computation algorithm $\mathcal{A}$ is a (randomized) algorithm which satisfies the following: $\mathcal{A}$ receives a sequence $i_1, i_2, \ldots$ of up to $q(\ell)$ queries one by one; upon receiving each query $i_j$ it produces an output ${o_j}$; with probability at least $1-\delta(\ell)$, there exists $y \in F(x)$ such that $o_j = y_j$ for every $j$. $\mathcal{A}$ has access to a random tape and local computation memory on which it can perform current computations, as well as store and retrieve information from previous computations. We assume that the input $x$, the local computation tape and any random bits used are all presented in the RAM world model, i.e., $\mathcal{A}$ is given
the ability to access a word of any of these in one step. The running time of $\mathcal{A}$ on any query is at most $t(\ell)$, which is sublinear in $\ell$, and the local computation memory of $\mathcal{A}$ is at most $s(\ell)$.  Unless stated otherwise, we always assume that that the error parameter $\delta(\ell)$ is at most some constant, say, $\frac{1}{3}$. We say that $\mathcal{A}$ is a \emph{strongly local computation algorithm} if both $t(\ell), s(\ell)$ are upper bounded by $\mathrm{\log}^c \ell$ for some constant $c$.
\end{localalgo}

As we have already mentioned, in this paper we will be  interested  in local computation algorithms for constraint satisfaction  problems $(\mathcal{X}, \mathcal{C}) $, where $\mathcal{X}$ is a set of variables and $\mathcal{C}$ is a set of constraints over these variables.   
To simplify the statement of our results, whenever we say there exists a $(q,t,s,\delta)$-local computation algorithm for ($\mathcal{X}, \mathcal{C})$ we mean that there exists $n_0$ 
and an algorithm $\mathcal{A}$ such that $\mathcal{A}$ is a $(q,t,s,\delta)$-local computation algorithm when the input is restricted to instances of $(\mathcal{X}, \mathcal{C})$ such that $|\mathcal{X} | \ge n_0$. 
In other words, our results apply to constraint satisfaction problems of large size.

\section{Statement of Results}\label{Statement}

For simplicity, we will present our results and techniques for the general LLL in the \emph{variable setting}, i.e., the setting considered by  Moser and Tardos~\cite{MT}. In Section~\ref{improvedLLL} of the Appendix  we discuss how our techniques can be adapted to capture improved LLL criteria and generalized to  settings beyond the one of~\cite{MT}.

\paragraph*{The Setting.}
Let $\mathcal{X} = \{x_1, x_2, \ldots, x_n \}$ be a set of variables  with domains $D_1, \ldots, D_n$.  We define $\Omega = \prod_{i = 1}^n  D_i$ to be the set of possible value assignments for the variables of $\mathcal{X}$, and we sometimes refer to its elements as \emph{states}. We also consider a set of \emph{constraints}  $\mathcal{C}= \{ c_1, c_2, \ldots, c_m \}  $.  Each  constraint $c_i$ is  associated with a set of variables $\mathrm{var}(i) \subseteq \mathcal{X} $ and corresponds to a set  of forbidden value assignments for these variables, i.e., that \emph{violate} the constraint.

We consider an arbitrary product probability measure $\mu$ over the variables of $\mathcal{X}$ along with the family of bad events $\mathcal{A} = \{ A_1, \ldots, A_m \}$, where  $A_i$ corresponds to the states in $\Omega$ that violate  $c_i$. The \emph{dependency graph} $G = G(V,E)  $ related to $(\Omega, \mu, \mathcal{A})$ is the graph with vertex set  $V = [m]$ and edge set $E = \{ (i,j):  \mathrm{var}(i) \cap \mathrm{var}(j)  \ne \emptyset \}$. (Notice that since this dependence relationship is always symmetric, we have a graph instead of a digraph.) The \emph{neighborhood} of an event $A_i$  is defined as  $D(i)  = \{j: (i,j) \in E \}$ and notice that $A_i$ is mutually independent of $\mathcal{A} \setminus \left( D(i) \cup \{ i\} \right)$. Finally, for $i, j \in [m]$ we denote by $\mathrm{dist}(i, j)$ the length of a shortest path between $i$ and $j$ in $G$.

\paragraph*{Assumptions.}
 
We will make computational assumptions similar  to~\cite{Ronitt} (but less restrictive). For a variable $x$, we let $N(x)$ denote the set of constraints that contain $x$ and define $d = \max_{x \in \mathcal{X} } N(x)$. We further define an $n \times m$ incidence matrix $\mathcal{M}$ such that, for any variable $x$ and constraint $c$, $\mathcal{M}_{x,c} = 1$  if $c \in N(x)$ and $\mathcal{M}_{x,c} = 0$, otherwise. The input constraint satisfaction problem $(\mathcal{X}, \mathcal{C})$ will be represented by its variable-constraint incidence matrix $\mathcal{M}$. Let $k = \max_{i \in [m ] } |\mathrm{var}(i)|$ denote the maximum number of variables associated with a constraint. We will also assume that $d,k \in O( \log^c (n) )$ for some constant $c \ge0$,  which means that matrix $\mathcal{M}$ is necessarily very sparse. Therefore, we also assume that the matrix $\mathcal{M}$ is implemented via linked lists for each row (i.e., variable $x$) and each column (i.e., constraint $c$) and that 
\[
\max_{i \in [m ] } \psi_i = O(n^{\lambda}) \enspace 
\] 
for some constant $\lambda > 0$.
(Here the set of parameters $\{\psi_i \}_{i=1}^{m} $ is the one used in the LLL condition~\eqref{eq:LLL}. We note that in most applications $\max_{i \in [m] } \psi_i =O(1)$.) We can now state our main result precisely.
\begin{theorem}\label{main}
Assume that $( \mathcal{X}, \mathcal{C}, \mu)$  satisfies the Lov\'{a}sz Local Lemma conditions with  $\epsilon$-slack and define $\zeta = \zeta (\epsilon, k,d) =   \log (1/( 1- \epsilon) )/ \log (k d)  $. Let $\alpha, \beta, \gamma  >0 $ be constants such that $  \beta \zeta  > \alpha  + \gamma  + \lambda$. Then there exists a  $(n^{\alpha}, n^{\beta}, O(n^{ \min \{ 1, \alpha + \beta \} } ) , n^{-\gamma}  )$-local computation algorithm for $(\mathcal{X},\mathcal{C})$.
\end{theorem}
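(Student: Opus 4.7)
The strategy is to locally simulate a canonical execution of the Moser--Tardos Resample algorithm and argue, via the witness-tree method sharpened by the $\epsilon$-slack hypothesis, that the value of any queried variable in that execution can be recovered by inspecting only a polylogarithmic-radius neighborhood of the dependency graph $G$. I first commit the randomness once and for all: to each constraint $c_i$ I assign a uniform priority $\pi_i \in [0,1]$ and an i.i.d.\ infinite sequence $r_{i,1}, r_{i,2},\ldots$ of ``fresh'' resamples distributed as $\mu_{\mathrm{var}(i)}$; each variable $x_j$ gets an initial value sampled from $\mu_j$. The \emph{canonical} execution, at each step, resamples the lowest-priority currently-violated constraint using its next unused sample, and terminates when no constraint is violated, producing a well-defined output $\sigma^\ast$ against which every answer will be consistent. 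All random bits are generated lazily, so the working memory remains $O(m)$.

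On a query for $x_j$ I would compute $\sigma^\ast(x_j)$ by reconstructing, backward in time, the Moser--Tardos witness tree $T(x_j)$: identify the last constraint $c \in N(x_j)$ to have been resampled in the canonical execution, then recursively determine, for each variable in $\mathrm{var}(c)$, the most recent prior resampling of a neighboring constraint that fixed its value, and so on. Because adjacent nodes in $T(x_j)$ are constraints sharing a variable, the traceback stays inside a ball $B_r(j) \subseteq V(G)$ of radius $O(r)$ around $x_j$ whenever $\mathrm{depth}(T(x_j)) \le r$. Setting $r = \lceil \beta \log n / \log(kd) \rceil$ ensures $|B_r(j)| \le (kd)^{r+1} \le n^\beta$, and since navigating $\mathcal{M}$'s incidence lists costs $O(1)$ per node, each query is answered in time at most $n^\beta$.

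The core probabilistic step is to bound $\Pr[\mathrm{depth}(T(x_j)) > r]$. By the Moser--Tardos witness-tree lemma, any specific proper tree $\tau$ appears in the log with probability at most $\prod_{v \in \tau} \mu(A_v)$. Rewriting the $\epsilon$-slack condition as
\[
\mu(A_i) \le (1-\epsilon)\,\psi_i \Big/ \sum_{S \subseteq D(i) \cup \{i\}} \prod_{j \in S} \psi_j,
\]
and summing over all proper trees of depth exceeding $r$ rooted in $N(x_j)$ via a Kolipaka--Szegedy/Galton--Watson branching argument, the total probability is at most $O\bigl(d \cdot \psi_{\max}(1-\epsilon)^r/\epsilon\bigr) \le n^{\lambda - \beta \zeta + o(1)}$. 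Under the hypothesis $\beta\zeta > \alpha + \gamma + \lambda$ this is at most $n^{-\alpha-\gamma}$ per query, and a union bound over the $n^\alpha$ queries yields overall error at most $n^{-\gamma}$, as required.

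The main obstacle is twofold. First, the traceback must be executable locally, without knowledge of the global resampling order: I would use the priorities $\pi_i$ rather than timestamps to identify the ``last'' and ``most recent'' resamplings, which confines the recursion to $B_r(j)$ while remaining consistent across queries. Second, the classical Kolipaka--Szegedy count bounds the probability of trees of large \emph{size}, but here we need decay in \emph{depth}; extracting the factor $(1-\epsilon)^r$ from the $\epsilon$-slack via a depth-aware branching-process estimate is the technical heart of the argument and is precisely what produces the constant $\zeta = \log(1/(1-\epsilon))/\log(kd)$ appearing in the statement.
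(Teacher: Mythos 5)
Your proposal takes a genuinely different route from the paper, but there is a gap in its central step that you would need to close.

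The paper's algorithm does \emph{not} try to determine locally what some fixed canonical execution of Moser--Tardos would have output. Instead, it exploits the fact that MT works with an \emph{arbitrary} selection rule for which violated constraint to resample next: on query $x_i$, it freshly resamples the as-yet-untouched variables in the ball $\mathcal{I}(x_i,r)$ (so the whole run is a valid prefix of a global MT execution), then runs Depth-First MT \emph{forward} inside that ball until every constraint there is satisfied, and outputs the current value of $x_i$. Correctness then reduces to: (i) with high probability the local run terminates within $t$ steps, which is handled by a running-time tail bound for MT from an \emph{arbitrary} (not product-distributed) initial state (Theorem~\ref{run_time_MT}), needed precisely because earlier queries have already perturbed the distribution; and (ii) with high probability the global continuation never revisits $x_i$, which is bounded by observing that any such revisit forces a witness tree of \emph{size} (not depth) at least $r$, together with Lemma~\ref{bound} giving $\sum_{\tau\in\mathcal W_{j,s}}\Pr[\tau]\le\psi_j(1-\epsilon)^s$.

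Your plan instead fixes priorities $\pi_i$ and a canonical lowest-priority-first execution, and tries to recover $\sigma^\ast(x_j)$ by a backward traceback. The gap is that the traceback is not well-defined as a local procedure: ``identify the last constraint $c\in N(x_j)$ to have been resampled in the canonical execution'' already requires knowing the execution history, because whether and when $c$ is violated depends on the values of $\mathrm{var}(c)$ at each time step, which in turn depend on potentially distant earlier resamplings. Priorities alone do not give you the clean well-founded recursion that makes, say, the random-priority MIS traceback work; you would need to show that the set of resamplings influencing $x_j$ has a locally computable certificate, and that is exactly the structure the witness-tree machinery encodes only \emph{a posteriori}, given the full log. (You flag this yourself as the ``main obstacle,'' but the resolution you sketch is not a proof.) Two further points: the ``depth vs.\ size'' worry you raise is actually not the issue --- depth $>r$ implies size $>r$, so the standard size bound suffices; and your plan has no abort mechanism and no argument bounding the per-query running time when the traceback would exceed the ball, which the paper handles explicitly via Theorem~\ref{run_time_MT} and Line~\ref{abort}. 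Switching to the paper's forward-simulation-plus-prefix argument would let you keep the witness-tree bound you already have in mind while sidestepping the ill-posed traceback.
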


\begin{remark}\label{restricted_version}
If  the number of queries is $O( \mathrm{polylog} (n) ) )$, the probability of error  is $\Omega \left(  \frac{1}{ \mathrm{polylog} (n)  } \right)$, and $k,d =O(1)$, then if the LLL conditions hold with $\epsilon$-slack for some fixed constant $\epsilon > 0$, then for any arbitrarily small constant $\beta  >0$ there exists a LCA that takes $n^{\beta}$ time per query and uses $O(n^{\beta} \mathrm{polylog}(n) )$ space (for all sufficiently large $n$).
\end{remark}

\section{Our Algorithm}\label{the_algorithm}
In this section we describe our algorithm formally  as well as the main idea  behind its analysis.

To describe our algorithm, we first recall the algorithm of Moser and Tardos as well as a couple useful facts about its performance.

\begin{algorithm}
\begin{algorithmic}[1]  
\Procedure{RESAMPLE}{$\mu, \mathcal{C},\mathcal{X}$}

\State Sample all variables in $\mathcal{X}$ according to $\mu$

\While{violated constraints exist} 
\State Pick an arbitrary violated constraint $c_i$
\State (Re)sample every variable in $\mathrm{var}(i)$ according to $\mu$
\EndWhile

\EndProcedure
\end{algorithmic}
\end{algorithm}

Notice  that the most expensive operation of the Moser-Tardos algorithm is searching for constraints which are currently violated. In~\cite{spencer_hay}, a simple optimization is suggested to reduce this cost, which will be helpful to us as well.  The idea is to keep a stack which, at every step, contains all the currently violated constraints. To do that, initially, we go over all the constraints and add the violated ones into the stack.  Then, each time we resample a constraint $c$,  in order to update the stack, we are only required to check the constraints that share variables with $c$ to determine whether they became violated, in which case, we add them to the stack. The main benefit of maintaining  this data structure is that we avoid going over the whole set of constraints at each step. In particular,   using this method, we only have to put a $O(kd)$ amount of work after each resampling. This method is usually referred to as Depth-First MT.

In the following, when we say ``apply the Depth-First MT algorithm for at most $t$ steps", we mean that we apply the Resample algorithm above for at most $t$ steps, \emph{without} performing the initial sampling of the variables of $\mathcal{X}$ (all relevant variables will have been assigned values by other means).

For $i \in [m]$ and $r \ge 0$, let $\mathrm{Ball}(i,r) = \{ j \in [m]:  \mathrm{dist}(i,j)  \le r \} $ be the elements of $[m]$ whose distance to $i$ in $G$ is at most $r$. Furthermore, for a variable $x$ we denote by $\mathcal{I}(x,r)$ the sub-problem of $(\mathcal{X}, \mathcal{C})$ induced by the constraints  in  $ \bigcup_{c_i \ni x} \mathrm{Ball}(i,r) $ and the variables they contain. Notice that if $(\mathcal{X}, \mathcal{C})$ satisfies the LLL conditions, then $\mathcal{I}(x,r)$ does as well for any $x$ and $r$.  We are now ready to describe our meta-algorithm, that takes as input $q,t,\delta$ and $\epsilon$, i.e., the number of queries, the desired upper bounds on  the running time per query, the probability of error, and the slack, respectively.  For the sake of brevity, we slightly abuse notation and for $i \in [q]$ denote by $x_i$ the variable of the $i$-th query. 


\begin{algorithm}
\begin{algorithmic}[1]  
\Procedure{Respond to  Queries}{$q,t,\delta,\epsilon$}

\State $\eta \leftarrow \max_{x \in \mathcal{X} } \sum_{ c_j \ni x} \psi_j $.
\State $r \leftarrow \log  (q \eta/( \delta  - q/n^2 ) )/\log(1/( 1-\epsilon) ) 
 $
 \State  $ S \leftarrow \emptyset$
 \For { $i =1$ to $q$ }

 \State Resample each variable in $ \mathcal{I}(x_i, r) \setminus S $ \Comment $x_i, i \in [q]$, is the $i$-th query. \label{initial_resamplings}
 \State $S  \leftarrow S \cup  \mathcal{I}(x_i, r)  $
\State Apply the Depth-First MT algorithm to $\mathcal{I}(x_i,r)$ \label{run_MT} for at most $t$ steps
\If{a satisfying assignment for $\mathcal{I}(x_i,r)$ is found} 
 \State Output the value of $x_{i}$
\Else   
\State Abort \label{abort}
\EndIf
\EndFor
\EndProcedure
\end{algorithmic}
\end{algorithm}

The main idea behind our algorithm comes from the following property of the Moser-Tardos algorithm. Assume that in an execution of the Moser-Tardos algorithm, in the current step, every constraint in a ball of radius $r$ around variable $x$ is satisfied. We prove that the probability that the algorithm will have to resample $x$ in a later step drops exponentially fast with $r$. In other words, for large enough $r$, the current value of $x$ is a good guess for the value of $x$ in the final output. To exploit this fact, we use that in the Moser-Tardos algorithm the strategy for choosing which violated constraint to resample can be arbitrary, so that we get an LCA as follows: upon receiving query (variable) $x_i$, our algorithm tries to create a large ball of satisfied constraints around $x_i$, by executing the Moser-Tardos algorithm with a strategy prioritizing the constraints in the ball. Naturally, then the radius of the ball governs the trade-off between speed and accuracy.

\section{Proof of Theorem~\ref{main}}

In this section we present the proof of Theorem~\ref{main}.  Clearly, the  running time of our algorithm on any query is at most $t$. Further,  the local computation memory it requires is dictated by the number of variables it resamples (since it has to store the ``current'' value of every such variable), and the space  required for the stack in the application of the Depth-First MT. The former is at most linear while the latter is sublinear.  Therefore, we get  a $O(n)$ bound overall. (As it will become clear later, when the number of queries is limited, i.e., when $\alpha  < 1- \beta$, then the memory required is $O(n^{\alpha + \beta})$, i.e., sublinear.)

In the rest of the proof we will focus on bounding the probability that our algorithm makes an error.

Observe that Line~\ref{initial_resamplings} allows us to see the execution of our algorithm  as a prefix of a \emph{complete} execution of the Moser-Tardos algorithm from a random initial state. The probability that our algorithm makes an error is bounded by the sum of (i) the probability that our algorithm ever aborts in Line~\ref{abort}; (ii) the probability that the complete execution of the Moser-Tardos algorithm resamples a (queried) variable after our algorithm has returned its response for it. We start by bounding the former, since it's a more straightforward task.

\subsection{Bounding the Running Time as a Function of the Radius}\label{running_time_sec}

To bound the probability that our algorithm aborts in Line~\ref{abort}  we will use Theorem~\ref{run_time_MT} below, a direct corollary of the main result in~\cite{AIK}, bounding the running time of the Depth-First MT algorithm from an \emph{arbitrary} initial state. Let
\[
\xi = \max_{i \in [m] } \log (1 + \psi_i) \enspace .
\]


\begin{theorem}\label{run_time_MT}
If the LLL conditions hold with $\epsilon$ slack, then the probability that the MT algorithm starting at an arbitrary initial state has not terminated after $(n +    m \xi)/\log( 1/(1-\epsilon) )+ s$ steps is at most $(1-\epsilon)^s$. 
\end{theorem}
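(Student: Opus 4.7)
My plan is to obtain Theorem~\ref{run_time_MT} as a direct corollary of the main result of~\cite{Harmonic} by invoking the witness-tree / entropy-compression analysis that paper develops for the Moser--Tardos algorithm started from an \emph{arbitrary} initial state. The core of that analysis is a charging scheme: if after $t$ resampling steps the algorithm has not yet terminated, then the trajectory witnesses a forest-structured history whose occurrence probability factors into three contributions, namely (i) a per-node contribution controlled by $\mu(A_{[v]})$ that, under condition~\eqref{eq:LLL} with $\epsilon$-slack, yields a geometric contraction factor of $(1-\epsilon)$ per step, (ii) a per-step indexing cost for selecting which bad event is resampled, whose amortized contribution across $m$ events is captured by $\xi = \max_i \log(1+\psi_i)$, and (iii) an initial-state term of size at most $n$ (in log-probability units) that pays for the adversarial starting assignment of the $n$ variables of $\mathcal{X}$.

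Combining these three contributions, the probability of surviving $t$ steps decays geometrically as a function of $t\log(1/(1-\epsilon))$ once the fixed overhead of $n+m\xi$ has been ``paid off''. Concretely, setting
\[
t_0 \;=\; \frac{n + m\xi}{\log(1/(1-\epsilon))}
\]
so that the geometric contraction $(1-\epsilon)^{t_0}$ cancels the accumulated initial-state and indexing costs (up to the universal constants absorbed in the statement of~\cite{Harmonic}), I would conclude that the probability of not having terminated after $t_0 + s$ steps is at most $(1-\epsilon)^s$, which is exactly the desired bound. In particular, only the geometric tail past the threshold $t_0$ matters, so no additional union bound or Markov-style slack is needed.

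The main obstacle here is the arbitrary initial state. The classical Moser--Tardos witness-tree argument crucially uses that the initial variable values are fresh $\mu$-samples, and when this is not the case one must explicitly pay for the starting assignment in the encoding. The refined framework of~\cite{Harmonic} carries out precisely this accounting, which is why the threshold in the theorem carries the extra $n/\log(1/(1-\epsilon))$ term compared to the classical $m\xi/\log(1/(1-\epsilon))$ expected-runtime bound. Once the right general statement from~\cite{Harmonic} is isolated, the specialization to our setting is mechanical: match the LLL-slack to the geometric contraction rate $(1-\epsilon)$, match $\xi$ to the per-step indexing cost, and match $n$ to the initial-state overhead.
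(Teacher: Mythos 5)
Your proposal takes exactly the paper's route: Theorem~\ref{run_time_MT} is stated in the paper as ``a direct corollary of the main result in~\cite{Harmonic}'' with no proof given in the text, so citing~\cite{Harmonic} is all the authors do here as well. Your sketch of the underlying charging argument is a plausible reconstruction, but since neither you nor the paper works out the derivation, two details of your bookkeeping are conjectural rather than verified: $m\xi$ with $\xi=\max_i\log(1+\psi_i)$ is most naturally an upper bound on $\sum_i\log(1+\psi_i)$, i.e.\ the negative logarithm of the LLL lower bound $\prod_i(1+\psi_i)^{-1}$, not an amortized per-step ``indexing cost''; and attributing the $n$ to an initial-state overhead ``in log-probability units'' leaves open why the cost is $n$ rather than $\log|\Omega|=\sum_i\log|D_i|$ when the domains $D_i$ need not be binary.
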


There are two  reasons why we need to use Theorem~\ref{run_time_MT} instead of the original running time bound of Moser and Tardos~\cite{MT}. The first and most important one, is that the original bound assumes that the initial state of the algorithm is selected according to the product measure $\mu$. However, when we run the 
MT algorithm in response to a query for variable $x_i$, some of the variables of $\mathcal{I}(x_i,r)$ may have been resampled multiple times in earlier executions of the for loop and, thus, be correlated with each other.  The second reason is that Theorem~\ref{run_time_MT} exploits the slack in the LLL conditions to ensure that the algorithm terminates fast with high probability and not just in expectation. 

{We are now ready to give a tail-bound for the running time of our algorithm on a single query, as a function of the radius $r$. Recall that each constraint contains at most $k$ 
variables and that each variable is contained in at most $d$ constraints and $k,d$ are at most polylogarithmic. We use $\widetilde{O}(\cdot)$ notation to hide poly-logarithmic factors in $n,m$.
\begin{lemma}\label{running_time}
Let $T_0 = (kd)^r (\xi/ \log(1/(1-\epsilon))$. Step~\ref{run_MT} takes more than $\widetilde{O}(T_0+s)$ time with probability at most 
$(1 - \epsilon)^s$.
\end{lemma}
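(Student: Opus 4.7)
The plan is to apply Theorem~\ref{run_time_MT} to the restricted instance $\mathcal{I}(x_i,r)$ from an arbitrary initial state, then convert the bound on the number of Moser--Tardos resampling steps into a bound on wall-clock time using the $\widetilde{O}(1)$ overhead per step of the Depth-First implementation.

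First, I would estimate the size of $\mathcal{I}(x_i,r)$. Since each variable lies in at most $d$ constraints and each constraint touches at most $k$ variables, the dependency graph $G$ has maximum degree at most $k(d-1) \le kd$. Hence for every $j \in [m]$,
\[
|\mathrm{Ball}(j,r)| \;\le\; 1 + kd + (kd)^2 + \cdots + (kd)^r \;\le\; 2(kd)^r.
\]
Since $x_i$ is contained in at most $d$ constraints, $\mathcal{I}(x_i,r)$ contains at most $m' = O(d\,(kd)^r)$ constraints and hence at most $n' = O(kd\,(kd)^r)$ variables. Both quantities are $\widetilde{O}((kd)^r)$, absorbing the $k,d = \mathrm{polylog}(n)$ factors into $\widetilde{O}(\cdot)$.

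Next, I would invoke Theorem~\ref{run_time_MT} on $\mathcal{I}(x_i,r)$. The crucial point is that the variables of $\mathcal{I}(x_i,r)$ are \emph{not} distributed according to $\mu$ when Step~\ref{run_MT} begins, because some of them may have been resampled in earlier iterations (for previous queries) and their current values are correlated. This is precisely why we use Theorem~\ref{run_time_MT}, which tolerates an arbitrary initial state. Since $(\mathcal{X},\mathcal{C})$ satisfies the LLL with $\epsilon$-slack, so does $\mathcal{I}(x_i,r)$, and Theorem~\ref{run_time_MT} gives that the probability the Depth-First MT algorithm on $\mathcal{I}(x_i,r)$ performs more than $(n' + m'\xi)/\log(1/(1-\epsilon)) + s$ resampling steps is at most $(1-\epsilon)^s$. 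Plugging in the bounds on $n',m'$ yields
\[
\frac{n' + m'\xi}{\log(1/(1-\epsilon))} \;=\; \widetilde{O}\!\left(\frac{(kd)^r\,\xi}{\log(1/(1-\epsilon))}\right) \;=\; \widetilde{O}(T_0).
\]

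Finally, I would account for the per-step cost of the Depth-First implementation. Each resampling step redraws the $k$ variables of one violated constraint and then inspects the at most $kd$ constraints sharing a variable with it to update the stack of currently violated constraints; this is $O(k^2 d)$ elementary operations, which is $\widetilde{O}(1)$ under our assumption $k,d=O(\mathrm{polylog}(n))$. Multiplying through, the wall-clock time is at most $\widetilde{O}(T_0+s)$ except with probability $(1-\epsilon)^s$, as claimed. The main (minor) obstacle is keeping the polylogarithmic factors clean; the conceptual content sits entirely in invoking Theorem~\ref{run_time_MT} from an arbitrary initial state and in the geometric ball-size bound.
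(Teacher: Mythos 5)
Your proof is correct and follows essentially the same route as the paper's: bound $|\mathrm{Ball}(i,r)|$ geometrically via the degree bound $kd$ on the dependency graph, invoke Theorem~\ref{run_time_MT} (precisely because the initial state may be correlated from earlier queries), and absorb the polylogarithmic per-resampling cost of the Depth-First implementation into $\widetilde{O}(\cdot)$. The only cosmetic differences are that the paper also folds the BFS cost of constructing $\mathcal{I}(x_i,r)$ into this lemma and uses a single bound $B$ for both variable and constraint counts rather than your separate $n',m'$.
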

\begin{proof}
Let us first derive an upper bound $B$ on the number of constraints (and of variables) in $\mathrm{Ball}(i, r)$. Since the maximum degree of the dependency graph is at most $kd$ and the subgraph that maximizes the number of constraints inside $\mathrm{Ball}(i,r)$ is the full $kd$-ary tree of depth $r$, we see that $| \mathrm{Ball}(i, r) | = O(  (kd)^{r+1} ) = \widetilde{O}( (kd)^r)$, since $k,d$ are at most poly-logarithmic. Thus, we can assume that $B = \widetilde{O}( (kd)^r)$.

The running of our algorithm on query $x_i$ consists  of computing the sub-problem $\mathcal{I}(x_i,r)$ and then applying Depth-First MT to it. By ``computing the sub-problem $\mathcal{I}(x_i,r)$" we mean creating an incidence matrix $\mathcal{M}_{i,r}$ that corresponds to the subgraph of the dependency graph associated with $\mathcal{I}(x_i,r)$, represented similarly to $\mathcal{M}$ via linked lists. To perform this task we can do  a Breadth First Search starting from a node $j$ such that $c_j \ni v_i$  for depth $r$. This takes $\widetilde{O}((kd)^{r} )$ time,  since we can find the neighbors of a constraint in the dependency graph in poly-logarithmic time   and the subgraph of the dependency graph that corresponds to $\mathcal{I}(x_i,r)$  has at most $Bkd = \widetilde{O}((kd)^{r})$ edges. 

For the application of Depth-First MT to $\mathcal{I}(x_i,r)$, Theorem~\ref{run_time_MT} asserts that if $
T_0 = (B + B \xi)/\log(1/(1-\epsilon))$,
then the probability that a satisfying assignment is not found after $T_0+s$ resamplings is at most $(1-\epsilon)^s$. 
Recalling that $B = \widetilde{O}( (kd)^r)$, that the amount of work per resampling is $O(kd)$, and that both $k$ and $d$ are polylogarithmic and adding the bound above for formulating each subproblem, concludes the proof.
\end{proof}

\subsection{Bounding the Probability of Revising a Variable as a Function of the Radius}\label{failure_probability}

To bound the probability of error of our algorithm we first need to recall a key element of the analysis of~\cite{MT}.
\subsubsection{Witness Trees}\label{WTL}

We denote by $\Sigma = \sigma_1 \xrightarrow{w_1}  \sigma_2 \xrightarrow{w_2} \sigma_3 \xrightarrow{w_3} \ldots $  the random variable that equals the \emph{trajectory} of an execution of the Moser-Tardos algorithm, where,  for each $i \ge 1$,
 $\sigma_i \in \Omega$  denotes the $i$-th state of the trajectory and $w_i \in [m]$  the index of the bad event resampled.  We  also call the random variable $W(\Sigma) = ( w_1, w_2, \ldots  ) $ the  \emph{witness sequence} of $\Sigma$. 
 
We first  recall the definition of witness trees from~\cite{MT}, while slightly reformulating to fit our setting. A witness tree $\tau  = ( T, \ell_{T} )$ is a finite rooted, unordered,  tree $T$ along with a labelling $\ell_T: V(T) \rightarrow [m] $ of its vertices with indices of bad events such that the children of a vertex $v \in V(T)  $ receive labels from $D( \ell(v) ) \cup \{ \ell(v) \} $. To lighten notation, we will sometimes write $(v)$ to denote $\ell(v)$ and $V(\tau)$ instead of $V(T)$.  Given a witness sequence $W = (w_1, w_2, \ldots, w_t) $ we associate with each $i \in [t] $ a witness tree $\tau_W(i) $ constructed in $i$ steps as follows:
let $\tau_W^{i}(i) $ be an isolated vertex labelled by $w_i$; then, going backwards for each $j = i-1, i-2, \ldots, 1$, if there is a vertex $v \in \tau_W^{j+1}(i)$ such that $ w_j \in D((v)) \cup \{ (v) \}$, then among those vertices  we choose the one having maximum 
distance from the root (breaking ties arbitrarily) and attach a new child vertex $u$ to $v$ that we label $w_j$ to get $\tau_W^{j}(i)$. If there is no such vertex $v$ then $\tau_W^{j+1}(i) = \tau_W^{j}(i)$. Finally, $\tau_W(i) = \tau_W^{1}(i)$.

We will say that a witness tree $\tau$ occurs in  a trajectory with witness sequence $W = (w_1, w_2, w_3, \ldots ) $, if there is $k  \ge 1$ such that $\tau_W(k) = \tau$.  Finally, we use the notation $\Pr[\cdot ]$ to refer to the probability of events in the probability space induced by the execution of the Moser-Tardos algorithm.

\begin{lemma}[The witness tree lemma~\cite{MT}]\label{witness_trees}
For  every witness tree $\tau$, $\Pr[\tau] \le \prod_{v \in V(\tau) }  \mu(A_{(v)})$.
\end{lemma}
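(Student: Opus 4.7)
The plan is to follow the classical coupling argument of Moser and Tardos. I would first replace the online sampling of variables with an equivalent offline model: for each variable $x \in \mathcal{X}$, imagine an infinite table $\sigma_x^{(0)}, \sigma_x^{(1)}, \sigma_x^{(2)}, \ldots$ of i.i.d.\ samples drawn from the marginal of $\mu$ on $x$. Couple this random tape to the Moser--Tardos execution so that the initial value of $x$ is $\sigma_x^{(0)}$ and the $t$-th resample of $x$ uses $\sigma_x^{(t)}$. This leaves the distribution of the trajectory $\Sigma$ (and hence of every witness tree $\tau_W(k)$) unchanged.

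Next, I would define a deterministic \emph{tree-check} procedure $\mathrm{Check}(\tau)$ on the random tape. Process the vertices of $\tau$ from the bottom up (by decreasing depth, breaking ties arbitrarily), and maintain a counter $c(x)$ for each variable $x$, initialized to $0$. When visiting a vertex $v$ with label $\ell(v)=i$, look at $\sigma_x^{(c(x))}$ for each $x \in \mathrm{var}(i)$; if the induced joint assignment lies in $A_i$, increment $c(x)$ by $1$ for each such $x$ and continue; otherwise declare failure. Since each table entry $\sigma_x^{(t)}$ is consulted at most once across the whole procedure, and the entries are mutually independent, we have $\Pr[\mathrm{Check}(\tau)\text{ succeeds}] = \prod_{v \in V(\tau)} \mu(A_{\ell(v)})$ by direct factorization over vertices.

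The main step, and the expected main obstacle, is the structural claim: if $\tau$ occurs in the trajectory $\Sigma$ as $\tau_W(k)$ for some $k$, then $\mathrm{Check}(\tau)$ succeeds on the same random tape. I would establish this by induction (on the bottom-up processing order), proving that when the check visits a vertex $v$ with label $i$, the counter $c(x)$ for each $x \in \mathrm{var}(i)$ equals precisely the number of resamplings of $x$ in $\Sigma$ strictly before the resampling step associated with $v$. This identity rests on the witness-tree construction: the rule that $v$'s children receive labels only from $D(i) \cup \{i\}$, combined with the \emph{deepest-attachable-vertex} tie-breaking, ensures that every earlier step of $\Sigma$ that touched some $x \in \mathrm{var}(i)$ corresponds to a proper descendant of $v$ in $\tau$ rather than being overlooked. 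Once this bookkeeping identity is in hand, the sample tuple consulted by the check at $v$ is exactly the pre-resampling value of $\mathrm{var}(i)$ at the corresponding step of $\Sigma$; since that step did resample $A_i$, this tuple lies in $A_i$ and the check succeeds at $v$.

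Combining the two pieces, $\Pr[\tau \text{ occurs}] \le \Pr[\mathrm{Check}(\tau)\text{ succeeds}] = \prod_{v \in V(\tau)} \mu(A_{\ell(v)})$, which is the bound of the lemma. The only genuinely delicate point is the inductive verification of the counter identity, where one must carefully track how the reverse construction of $\tau_W(k)$ interacts with the temporal order of resamplings on shared variables; everything else reduces to the independence of the offline random tape.
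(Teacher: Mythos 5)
The paper does not prove Lemma~\ref{witness_trees} itself: it is stated with an explicit citation to Moser and Tardos~\cite{MT} and invoked as a black box (the same is true of Lemma~\ref{trick}). Your argument is a correct reconstruction of the original Moser--Tardos proof, namely the random-table/resampling-tape coupling together with the tree-check procedure, so it matches the cited source rather than anything the present paper reproves.

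One point to tighten in the structural step. You justify the counter identity by asserting that every earlier resampling that touched a variable of $\mathrm{var}(i)$ ``corresponds to a proper descendant of $v$ in $\tau$.'' That is not true in general: the backward construction of $\tau_W(k)$ attaches each new vertex to the deepest \emph{attachable} vertex anywhere in the current tree, which need not lie in the subtree rooted at $v$. What is true, and what your bottom-up-by-depth $\mathrm{Check}$ procedure actually requires, is the weaker pair of facts: (i) every step of $\Sigma$ strictly before the step of $v$ whose label shares a variable with $\ell(v)$ is represented by a vertex of $\tau$ at \emph{strictly greater depth} than $v$; (ii) every step strictly after that shares a variable corresponds to a vertex at strictly smaller depth; and (iii) no two vertices at the same depth have labels sharing a variable. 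All three follow directly from the deepest-attachable-vertex rule (e.g., for (i), when the earlier step is processed backward, $v$ is already in the tree and is itself attachable, forcing the new vertex to depth at least $\mathrm{depth}(v)+1$). With ``proper descendant'' replaced by ``strictly deeper vertex,'' your induction on the processing order closes, and the rest of the argument (disjointness of consulted table entries, hence factorization of the success probability, hence $\Pr[\tau\text{ occurs}]\le\Pr[\mathrm{Check}(\tau)\text{ succeeds}]=\prod_{v}\mu(A_{\ell(v)})$) is exactly right.
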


\subsubsection{The Analysis}

Let $E_i$ be the event that the complete execution of the Moser-Tardos algorithm ever resamples query variable $x_i$ after the time, $t_i$, that it returned a response for it. Let $c_j$ be a constraint that contains $x_i$ and let $E_{i,j} \subseteq E_i$ denote the event that constraint $c_j$ is resampled after $t_i$. Clearly, $E_i \subseteq \bigcup_{ c_j \ni x_i } E_{i,j} $. The key insight is that in order for  $E_{i,j}$ to occur, it should be that at least $r$ constraints that form a path in the dependency graph which ends in $j$ must have been resampled after $t_i$. This is because, by the nature of our algorithm, right after step $t_i$, every constraint in $\mathrm{Ball}(i,r)$ is satisfied. This implies that the (first) resampling of the bad event $A_j$ that corresponds to event $E_{i,j}$ occurring  will be associated with a witness tree of size at least $r$. Thus, if $r$ is large, $E_{i,j}$ is  unlikely. Lemma~\ref{bound} makes this idea rigorous.
\begin{lemma}\label{bound}
Let $\mathcal{W}_{j,s}$ denote the set of all witness trees of size at least $s$ whose root is labelled by $j$. Then,
\begin{align*}
\sum_{ \tau \in \mathcal{W}_{j,s}} \Pr[ \tau  ] \le \psi_j (1-\epsilon)^s  \enspace.
\end{align*}
\end{lemma}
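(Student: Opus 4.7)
The plan is to combine Lemma~\ref{witness_trees} with a Galton-Watson argument in the style of Moser-Tardos, using the $\epsilon$-slack to extract a factor $(1-\epsilon)^{|V(\tau)|}$ from each tree's probability and then using the size lower bound to factor this out of the sum. First I apply Lemma~\ref{witness_trees} to bound $\Pr[\tau] \le \prod_{v \in V(\tau)} \mu(A_{(v)})$, and rearrange the $\epsilon$-slacked LLL condition~\eqref{eq:LLL} (equivalently, $\frac{\mu(A_i)}{\psi_i} \prod_{u \in D(i) \cup \{i\}} (1+\psi_u) \le 1-\epsilon$) to obtain $\mu(A_i) \le (1-\epsilon)\, \psi_i \prod_{u \in D(i) \cup \{i\}} (1+\psi_u)^{-1}$ for each $i$. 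Hence
\[
\prod_{v \in V(\tau)} \mu(A_{(v)}) \;\le\; (1-\epsilon)^{|V(\tau)|} \prod_{v \in V(\tau)} \psi_{(v)} \prod_{v \in V(\tau)} \prod_{u \in D((v)) \cup \{(v)\}} (1+\psi_u)^{-1}.
\]

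Next I introduce the Moser-Tardos Galton-Watson branching process whose root has label $j$, and in which each node labeled $i$ independently, for each $u \in D(i) \cup \{i\}$, spawns a child labeled $u$ with probability $\psi_u/(1+\psi_u)$ (and spawns no $u$-labeled child otherwise). Let $p(\tau)$ be the probability that this process produces $\tau$. Writing $p(\tau)$ as a product over vertices, using at each $v$ the factor $\prod_{u \in C(v)} \psi_u/(1+\psi_u) \cdot \prod_{u \in (D((v))\cup\{(v)\}) \setminus C(v)} 1/(1+\psi_u)$ where $C(v)$ denotes the multiset of labels of children of $v$, and observing that each non-root vertex $v$ contributes exactly one $\psi_{(v)}$ through its appearance in $C(\text{parent}(v))$, I aim to collapse the expression to
\[
p(\tau) \;=\; \frac{1}{\psi_j} \prod_{v \in V(\tau)} \psi_{(v)} \prod_{v \in V(\tau)} \prod_{u \in D((v)) \cup \{(v)\}} (1+\psi_u)^{-1}.
\]
Combined with the bound above this gives $\prod_{v} \mu(A_{(v)}) \le (1-\epsilon)^{|V(\tau)|} \psi_j \, p(\tau)$.

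Finally, distinct witness trees correspond to disjoint outcomes of the branching process, so $\sum_\tau p(\tau) \le 1$; since $|V(\tau)| \ge s$ for every $\tau \in \mathcal{W}_{j,s}$, the inequality $(1-\epsilon)^{|V(\tau)|} \le (1-\epsilon)^s$ may be pulled out of the sum, yielding
\[
\sum_{\tau \in \mathcal{W}_{j,s}} \Pr[\tau] \;\le\; (1-\epsilon)^s \psi_j \sum_{\tau \in \mathcal{W}_{j,s}} p(\tau) \;\le\; (1-\epsilon)^s \psi_j,
\]
which is the claimed bound.

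The main obstacle is the bookkeeping in the middle step: one has to verify that every $\psi$ and $(1+\psi)$ factor lines up correctly, which requires some care because the paper's definition of witness tree allows a child of $v$ to share the label $(v)$ itself (since children receive labels in $D((v)) \cup \{(v)\}$), slightly departing from the standard Moser-Tardos setting where children labels come from $D((v))$ only. Once the branching process is defined consistently with this convention, the telescoping that yields the formula for $p(\tau)$ and the conclusion $\sum_\tau p(\tau) \le 1$ both go through as in the classical argument.
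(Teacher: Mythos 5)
Your proof is correct and follows essentially the same route as the paper: both bound $\Pr[\tau]$ via the witness tree lemma, rewrite the $\epsilon$-slacked LLL condition to pull out a $(1-\epsilon)$ per vertex, and compare the residual product against the Moser-Tardos Galton-Watson branching-process probability $p(\tau)$ to bound the sum by $\psi_j$. The only cosmetic difference is that you rederive the formula for $p(\tau)$ from scratch, whereas the paper cites it as Lemma~\ref{trick}; note also that $C(v)$ is a \emph{set} (children of a vertex in a witness tree have distinct labels), not a multiset, though this does not affect the calculation.
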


We prove Lemma~\ref{bound} in Subsection~\ref{ProofBound}. Using it, we can show the following.
\begin{lemma}\label{error_prob_radius}
Let $\eta= \max_{x \in \mathcal{X} } \sum_{ c_j \ni x} \psi_j $. If
\begin{align*}
r \ge \frac{ \log \left( (\delta -  \frac{q}{n ^{ 2} }   ) ^{-1} q \eta   \right) }{  \log \left( 1/ (1- \epsilon) \right) } \enspace ,
\end{align*}
then the probability that our algorithm answers at least one  query incorrectly  is at most $\delta - \frac{q}{n^2}$. 
\end{lemma}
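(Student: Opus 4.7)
The plan is to view the entire execution of the algorithm as a prefix of a complete run of the Moser--Tardos algorithm (obtained by continuing, from whatever state our for-loop halts in, to resample violated constraints in some fixed order until none remain), and then to apply the witness-tree machinery of~\cite{MT}. For each query $i \in [q]$, let $t_i$ denote the time at which our algorithm outputs the value of $x_i$, and let $E_i$ be the event that $x_i$ is resampled at some step strictly later than $t_i$ in the complete run. If no $E_i$ occurs then our answers coincide with the final satisfying assignment produced by the complete run, so the probability of any wrong answer is at most
\[
\sum_{i \in [q]} \Pr[E_i] \le \sum_{i \in [q]} \sum_{c_j \ni x_i} \Pr[E_{i,j}] \, ,
\]
where $E_{i,j}$ is the event that $c_j$ is resampled at some time $>t_i$.

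The crux is a geometric claim about the witness tree $\tau_W(t^*)$ associated with the first step $t^* > t_i$ at which $c_j$ is resampled. Since $c_j \ni x_i$, we have $\mathrm{Ball}(j,r) \subseteq \mathcal{I}(x_i, r)$, and therefore every constraint in $\mathrm{Ball}(j,r)$ is satisfied immediately after $t_i$ (our algorithm terminated Depth-First MT on $\mathcal{I}(x_i,r)$ at that query). Hence the fact that $c_j$ is violated at time $t^*-1$ must be explained by a cascade of resamplings that propagates a variable modification across $\mathrm{Ball}(j,r)$. Tracing this cascade backward yields a chain $j = j_0, j_1, \ldots, j_K$ of constraint indices with $j_{k+1} \in D(j_k)$ and strictly decreasing resampling times $t_0 = t^* > t_1 > \cdots > t_K > t_i$, where $j_{k+1}$ is the first resampling in $(t_i, t_k)$ that touches a variable of $c_{j_k}$. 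The chain can only stop once $j_K$ leaves $\mathrm{Ball}(j,r)$, which forces $K \ge r+1$. A short induction on the backward construction of Section~\ref{WTL} then shows that in $\tau_W(t^*)$ the vertex labelled $j_k$ lies at depth at least $k$: when time $t_k$ is processed, the vertex labelled $j_{k-1}$ is already present and is compatible with $j_k$, so the ``deepest compatible vertex'' rule places the new vertex at depth $\ge \mathrm{depth}(j_{k-1})+1$. Consequently $|V(\tau_W(t^*))| \ge r+1$, i.e.\ $\tau_W(t^*) \in \mathcal{W}_{j,r}$.

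Armed with this, the finish is routine. The witness tree lemma together with Lemma~\ref{bound} gives
\[
\Pr[E_{i,j}] \le \sum_{\tau \in \mathcal{W}_{j,r}} \Pr[\tau] \le \psi_j (1-\epsilon)^r,
\]
and summing over $c_j \ni x_i$ and then over $i \in [q]$ produces the bound $q\eta(1-\epsilon)^r$ by definition of $\eta$. Substituting the hypothesised lower bound on $r$ yields $q\eta(1-\epsilon)^r \le \delta - q/n^2$, as required.

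I expect the main obstacle to be the geometric cascade-to-witness-tree step. One has to argue carefully that the chain $(j_k)$ really does appear as a root-to-descendant path in $\tau_W(t^*)$ in spite of the many other resamplings---coming from entirely unrelated parts of the trajectory---that may be attached to the tree between times $t_{k-1}$ and $t_k$. The induction works only because the witness-tree rule always attaches a new label to the \emph{deepest} compatible vertex, never to a shallower one, so that the depths of the chain vertices can only grow as we proceed. All remaining steps are standard union bounds and a direct invocation of Lemma~\ref{bound}.
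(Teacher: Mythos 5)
Your proof is correct and follows essentially the same route as the paper's: decompose $E_i$ into the events $E_{i,j}$, observe that every constraint in $\mathrm{Ball}(j,r)$ is satisfied right after $t_i$ so the first post-$t_i$ resampling of $c_j$ is witnessed by a tree of size $\geq r$, and then apply Lemma~\ref{bound} plus a union bound. The only difference is that you spell out the cascade-to-depth induction that the paper compresses into a single ``key insight'' sentence preceding Lemma~\ref{bound}, which is a reasonable thing to elaborate.
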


\begin{proof}
Combining Lemma~\ref{bound} with our observation regarding the minimum size of witness trees related to event $E_{i,j}$,  we obtain
\begin{align*}
\Pr[ E_{i,j} ] \le \sum_{\tau \in \mathcal{W}_{j,r}}  \Pr[ \tau  ] \le \psi_j (1-\epsilon)^r \enspace.
\end{align*}
Thus, taking  $r \ge \frac{ \log \left( (\delta -  \frac{q}{n ^{ 2} }   ) ^{-1} q \eta   \right) }{  \log \left( 1/ (1- \epsilon) \right) }$ and applying the union bound we obtain
\begin{align}\label{resampling_bounds}
\Pr \left[ \bigcup_{i \in [q] } \bigcup_{ c_j \ni x_i } E_{i,j}    \right] & \le   (1- \epsilon)^r    \sum_{ i =1 }^q  \sum_{ c_j \ni x_i  } \psi_j   \nonumber \\
										              &   \le  q \eta    (1- \epsilon)^r   \le \delta   - \frac{q }{n^{ 2} }  \enspace.
\end{align}
\end{proof}

\subsubsection{Proof of Lemma~\ref{bound}}\label{ProofBound}

A typical argument used in the algorithmic LLL literature to estimate sums over sets of witness trees, such as the sum in the statement of Lemma~\ref{bound}, is to consider a Galton-Watson branching process that produces each witness tree in the set of interest (and perhaps other trees) with positive probability. The idea is to then relate the probability of the branching process generating each tree with the probability that the tree occurs in the algorithm and exploit that the sum of the probabilities in the process is, by definition, bounded by 1. 
\begin{lemma}[\cite{MT}]\label{trick}
Let $\mathcal{W}_j$ denote the set of witness trees whose root is labeled by $j$. There exists a branching process  that outputs each witness tree $\tau \in \mathcal{W}_j$ with probability
\begin{align*}
p_{\tau}  =  \psi_j^{-1} \prod_{v \in V(\tau)}  \frac{\psi_{(v)}  }{\sum_{ S \subseteq D((v) ) \cup \{ (v) \}  } \prod_{f \in S } \psi_f  }  \enspace.
\end{align*}
\end{lemma}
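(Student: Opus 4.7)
The plan is to construct an explicit Galton--Watson-type branching process whose output is a random tree rooted at a vertex labeled $j$, and then match its distribution term-by-term with $p_\tau$.

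I would define the process as follows. Start with a single root vertex labeled $j$. At each currently unexpanded vertex $v$ with label $k := (v)$, independently sample a subset $S_v \subseteq D(k)\cup\{k\}$ according to
\[
\Pr[S_v = S] \;=\; \frac{\prod_{i\in S}\psi_i}{Z_k}, \qquad Z_k \;:=\; \sum_{S'\subseteq D(k)\cup\{k\}} \prod_{f\in S'}\psi_f,
\]
and attach to $v$ one new child for each element of $S_v$, labeled by that element; the process terminates on a branch whenever the sampled set is empty. Since $Z_k$ is a genuine normalizing constant, this is a well-defined distribution over subsets. By construction every finite output is a tree whose root has label $j$ and whose children at each vertex $v$ carry distinct labels drawn from $D((v))\cup\{(v)\}$, which matches the format of witness trees in $\mathcal{W}_j$ used in Section~\ref{WTL}.

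Next I would compute $\Pr[\tau]$ for a fixed finite $\tau\in\mathcal{W}_j$. Because the subset choices at different vertices are independent, the probability of producing $\tau$ factorizes, vertex by vertex, as the probability that at $v$ the sampled set $S_v$ is exactly the multiset of labels of its children in $\tau$:
\[
\Pr[\tau] \;=\; \prod_{v\in V(\tau)} \frac{\prod_{u\text{ child of }v\text{ in }\tau}\psi_{(u)}}{Z_{(v)}}.
\]
Every non-root vertex of $\tau$ appears as a child of exactly one other vertex, so the numerator telescopes to $\prod_{v\in V(\tau)\setminus\{\mathrm{root}\}}\psi_{(v)} = \psi_j^{-1}\prod_{v\in V(\tau)}\psi_{(v)}$. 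Substituting the definition of $Z_{(v)}$ in the denominator yields exactly $p_\tau$.

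The one subtle point I expect to be the real obstacle is confirming that the process really produces the \emph{proper} (distinctly-labeled siblings) witness trees, i.e., exactly the objects in $\mathcal{W}_j$ and nothing else. This is forced by the maximum-depth attachment rule in the definition of $\tau_W(i)$: if a vertex $v$ already has a child $u$ with label $i$, then any subsequent index $w_j = i$ that lies in $D((v))\cup\{(v)\}$ automatically lies in $D((u))\cup\{(u)\}$ as well, and $u$ is strictly deeper, so the index must attach to $u$ (or a descendant), not to $v$. Hence siblings in $\tau_W(i)$ are always distinctly labeled, matching the output format of the branching process. Once this identification is in place, no work remains beyond the telescoping calculation above.
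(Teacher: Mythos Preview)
Your proposal is correct and is exactly the standard Moser--Tardos Galton--Watson argument the paper invokes (the paper does not reprove the lemma; it cites \cite{MT} and only sketches the branching-process idea in the paragraph preceding the statement). Your subset-sampling formulation is equivalent to the usual ``include child $i$ independently with probability $\psi_i/(1+\psi_i)$'' description, since $\prod_{i\in D(k)\cup\{k\}}(1+\psi_i)=Z_k$, and your telescoping computation and your remark on distinct sibling labels (i.e., that $\mathcal{W}_j$ here is the set of \emph{proper} witness trees) are both correct and exactly what is needed.
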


Observe that since $\mathcal{W}_{j,s } \subseteq \mathcal{W}_j$, Lemma~\ref{trick} implies that
\begin{align}\label{second}
\psi_j \ge \psi_j \sum_{ \tau \in \mathcal{W}_{j,s}}  p_{\tau} = 
 \sum_{ \tau \in \mathcal{W}_{j,s}} \prod_{v \in V(\tau)} \frac{\psi_{(v)}  }{\sum_{ S \subseteq D((v) ) \cup \{(v) \} } \prod_{f \in S } \psi_f  }   \enspace.
\end{align}

Lemma~\ref{witness_trees} implies~\eqref{aris_school_1} below, the fact that the LLL conditions hold with $\epsilon$ slack implies~\eqref{aris_school_2}, the fact that every witness tree in $\mathcal{W}_{j,s}$ has size at least $s$ implies~\eqref{aris_school_3}, while inequality~\eqref{second}, finally, implies~\eqref{aris_school_4}.
\begin{align}
&\sum_{ \tau \in \mathcal{W}_{j,s}} \Pr[ \tau ]   \\
\le& \sum_{ \tau \in \mathcal{W}_{j,s}} \prod_{v \in V(\tau)} \mu((v)) 
\label{aris_school_1}  \\ 
 \le & \sum_{ \tau \in \mathcal{W}_{j,s}} \prod_{v \in V(\tau)} \frac{(1-\epsilon) \psi_{(v)}  }{\sum_{ S \subseteq D((v) ) \cup \{(v) \} } \prod_{f \in S } \psi_f  } 
\label{aris_school_2} \\
\le &  (1-\epsilon)^s   \sum_{ \tau \in \mathcal{W}_{j,s}} \prod_{v \in V(\tau)} \frac{\psi_{(v)}  }{\sum_{ S \subseteq D((v) ) \cup \{(v) \} } \prod_{f \in S } \psi_f  } 
\label{aris_school_3}  \\
\le & (1-\epsilon)^s \psi_j 
\label{aris_school_4} \enspace. 
\end{align}

%
%
%
%

\subsection{Concluding the Proof}

%

Recall that $\eta= \max_{x \in \mathcal{X} } \sum_{ c_i \ni x} \psi_i $, that  $\xi = \max_{i \in [ m] } \log (1+ \psi_i )$, and that $t$ denotes the required upper bound on the running time of our algorithm on a single query. Lemma~\ref{running_time}  and Lemma~\ref{error_prob_radius} imply that there exists $C = \widetilde{O}(1) $ such that if
\begin{align}\label{interval}
r \in \left[\frac{ \log \left( (\delta -  \frac{q}{n ^{ 2} }   ) ^{-1} q \eta   \right) }{  \log \left( 1/ (1- \epsilon) \right) },   \frac{ \log \left(  \frac{ t -  s}{\xi C }  \log \frac{1}{ 1- \epsilon} \right) }{ \log (k d)}    \right] \enspace,
\end{align}
where $ s=  \frac{2 \log n }{ \log ( 1/ (1- \epsilon )  } $, then the probability that the algorithm aborts in Line~\ref{abort}  or responds inaccurately on any query is at most $ \frac{1}{n^2} + (\delta - \frac{q}{n^2} ) \le  \delta$.

Recall that $\max_{i \in [m ] } \psi_i =  O( n^{\lambda}) $ and that $\zeta = \log (1/(1-\epsilon)) / \log kd $. It is not hard to see that if $ q= n^{\alpha}, t = n^{\beta}, \delta = n^{-\gamma} $ and $  \beta \zeta > \alpha  + \gamma  + \lambda$, then  the interval in~\eqref{interval} is non-empty for large enough $n$, concluding the proof of Theorem~\ref{main}. The proof of Remark~\ref{restricted_version} is very similar.

\bibliography{kolmo}

\begin{thebibliography}{10}

\bibitem{AIJACM}
Dimitris Achlioptas and Fotis Iliopoulos.
\newblock Random walks that find perfect objects and the {L}ov\'{a}sz local
  lemma.
\newblock {\em J. ACM}, 63(3):22:1--22:29, July 2016.

\bibitem{AIK}
Dimitris Achlioptas, Fotis Iliopoulos, and Vladimir Kolmogorov.
\newblock {A} local lemma for focused stochastic algorithms.
\newblock {\em {\rm To appear in} SIAM Journal on Computing}.
\newblock Preprint at {\em arXiv:1805.02026}.

\bibitem{AIS}
Dimitris Achlioptas, Fotis Iliopoulos, and Alistair Sinclair.
\newblock Beyond the {L}ov\'asz local lemma: {P}oint to set correlations and
  their algorithmic applications.
\newblock {\em {\rm To appear in} Proceedings of IEEE FOCS}, 2019.
\newblock Preprint at {\em arXiv:1805.02026}.

\bibitem{AilonCCSL08}
Nir Ailon, Bernard Chazelle, Seshadhri Comandur, and Ding Liu.
\newblock Property-preserving data reconstruction.
\newblock {\em Algorithmica}, 51(2):160--182, April 2008.

\bibitem{alon_lll}
Noga Alon.
\newblock A parallel algorithmic version of the local lemma.
\newblock {\em Random Struct. Algorithms}, 2(4):367--378, 1991.

\bibitem{alon2012space}
Noga Alon, Ronitt Rubinfeld, Shai Vardi, and Ning Xie.
\newblock Space-efficient local computation algorithms.
\newblock In {\em Proceedings of the twenty-third annual ACM-SIAM symposium on
  Discrete Algorithms}, pages 1132--1139. Society for Industrial and Applied
  Mathematics, 2012.

\bibitem{beck_lll}
J{\'o}zsef Beck.
\newblock An algorithmic approach to the {L}ov\'asz local lemma. {I}.
\newblock {\em Random Structures Algorithms}, 2(4):343--365, 1991.

\bibitem{BhattacharyyaGJJRW12}
Arnab Bhattacharyya, Elena Grigorescu, Madhav Jha, Kyomin Jung, Sofya
  Raskhodnikova, and David~P. Woodruff.
\newblock Lower bounds for local monotonicity reconstruction from
  transitive-closure spanners.
\newblock {\em SIAM Journal on Discrete Mathematics}, 26(2):618--646, 2012.

\bibitem{Bissacot}
Rodrigo Bissacot, Roberto Fern{\'{a}}ndez, Aldo Procacci, and Benedetto
  Scoppola.
\newblock An improvement of the {L}ov{\'{a}}sz local lemma via cluster
  expansion.
\newblock {\em Combinatorics, Probability {\&} Computing}, 20(5):709--719,
  2011.

\bibitem{BlumLR90}
M.~Blum, M.~Luby, and R.~Rubinfeld.
\newblock Self-testing/correcting with applications to numerical problems.
\newblock In {\em Proceedings of the Twenty-second Annual ACM Symposium on
  Theory of Computing}, STOC '90, pages 73--83, New York, NY, USA, 1990. ACM.

\bibitem{chang2018complexity}
Yi-Jun Chang, Qizheng He, Wenzheng Li, Seth Pettie, and Jara Uitto.
\newblock The complexity of distributed edge coloring with small palettes.
\newblock In {\em Proceedings of the Twenty-Ninth Annual ACM-SIAM Symposium on
  Discrete Algorithms}, pages 2633--2652. SIAM, 2018.

\bibitem{distributed}
Kai{-}Min Chung, Seth Pettie, and Hsin{-}Hao Su.
\newblock Distributed algorithms for the {L}ov{\'{a}}sz local lemma and graph
  coloring.
\newblock In Magn{\'{u}}s~M. Halld{\'{o}}rsson and Shlomi Dolev, editors, {\em
  {ACM} Symposium on Principles of Distributed Computing, {PODC} '14, Paris,
  France, July 15-18, 2014}, pages 134--143. {ACM}, 2014.

\bibitem{Czumaj_lll}
Artur Czumaj and Christian Scheideler.
\newblock Coloring non-uniform hypergraphs: a new algorithmic approach to the
  general {L}ov\'asz local lemma.
\newblock In {\em Proceedings of the {E}leventh {A}nnual {ACM}-{SIAM}
  {S}ymposium on {D}iscrete {A}lgorithms ({S}an {F}rancisco, {CA}, 2000)},
  pages 30--39, 2000.

\bibitem{LLL}
Paul Erd{\H{o}}s and L\'{a}szl\'{o} Lov{\'a}sz.
\newblock Problems and results on {$3$}-chromatic hypergraphs and some related
  questions.
\newblock In {\em Infinite and finite sets ({C}olloq., {K}eszthely, 1973;
  dedicated to {P}. {E}rd{\H o}s on his 60th birthday), {V}ol. {II}}, pages
  609--627. Colloq. Math. Soc. J\'anos Bolyai, Vol. 10. North-Holland,
  Amsterdam, 1975.

\bibitem{LopsTrav}
Paul Erd{\"o}s and Joel Spencer.
\newblock Lopsided {L}ov{\'a}sz local lemma and latin transversals.
\newblock {\em Discrete Applied Mathematics}, 30(2-3):151--154, 1991.

\bibitem{fischer2017sublogarithmic}
Manuela Fischer and Mohsen Ghaffari.
\newblock Sublogarithmic distributed algorithms for lov$\backslash$'asz local
  lemma, and the complexity hierarchy.
\newblock {\em arXiv preprint arXiv:1705.04840}, 2017.

\bibitem{2ke}
Heidi Gebauer, Tibor Szab{\'o}, and G{\'a}bor Tardos.
\newblock The local lemma is tight for {SAT}.
\newblock In Dana Randall, editor, {\em SODA}, pages 664--674. SIAM, 2011.

\bibitem{ghaffari2018derandomizing}
Mohsen Ghaffari, David~G Harris, and Fabian Kuhn.
\newblock On derandomizing local distributed algorithms.
\newblock In {\em 2018 IEEE 59th Annual Symposium on Foundations of Computer
  Science (FOCS)}, pages 662--673. IEEE, 2018.

\bibitem{ghaffari2017complexity}
Mohsen Ghaffari, Fabian Kuhn, and Yannic Maus.
\newblock On the complexity of local distributed graph problems.
\newblock In {\em Proceedings of the 49th Annual ACM SIGACT Symposium on Theory
  of Computing}, pages 784--797. ACM, 2017.

\bibitem{PermHarris}
David~G. Harris and Aravind Srinivasan.
\newblock A constructive algorithm for the {L}ov{\'{a}}sz local lemma on
  permutations.
\newblock In Chandra Chekuri, editor, {\em Proceedings of the Twenty-Fifth
  Annual {ACM-SIAM} Symposium on Discrete Algorithms, {SODA} 2014, Portland,
  Oregon, USA, January 5-7, 2014}, pages 907--925. {SIAM}, 2014.

\bibitem{HV}
Nicholas J.~A. Harvey and Jan Vondr{\'{a}}k.
\newblock An algorithmic proof of the {L}ov\'{a}sz local lemma via resampling
  oracles.
\newblock In Venkatesan Guruswami, editor, {\em {IEEE} 56th Annual Symposium on
  Foundations of Computer Science, {FOCS} 2015, Berkeley, CA, USA, 17-20
  October, 2015}, pages 1327--1346. {IEEE} Computer Society, 2015.

\bibitem{hassidim2016local}
Avinatan Hassidim, Yishay Mansour, and Shai Vardi.
\newblock Local computation mechanism design.
\newblock {\em ACM Transactions on Economics and Computation (TEAC)}, 4(4):21,
  2016.

\bibitem{LLLWTL}
Fotis Iliopoulos.
\newblock Commutative algorithms approximate the lll-distribution.
\newblock In Eric Blais, Klaus Jansen, Jos{\'{e}} D.~P. Rolim, and David
  Steurer, editors, {\em Approximation, Randomization, and Combinatorial
  Optimization. Algorithms and Techniques, {APPROX/RANDOM} 2018, August 20-22,
  2018 - Princeton, NJ, {USA}}, volume 116 of {\em LIPIcs}, pages 44:1--44:20.
  Schloss Dagstuhl - Leibniz-Zentrum fuer Informatik, 2018.

\bibitem{JhaR11}
M.~Jha and S.~Raskhodnikova.
\newblock Testing and reconstruction of lipschitz functions with applications
  to data privacy.
\newblock In {\em 2011 IEEE 52nd Annual Symposium on Foundations of Computer
  Science}, pages 433--442, Oct 2011.

\bibitem{Kolmofocs}
Vladimir Kolmogorov.
\newblock Commutativity in the algorithmic {L}ov{\'{a}}sz local lemma.
\newblock In Irit Dinur, editor, {\em {IEEE} 57th Annual Symposium on
  Foundations of Computer Science, {FOCS} 2016, 9-11 October 2016, Hyatt
  Regency, New Brunswick, New Jersey, {USA}}, pages 780--787. {IEEE} Computer
  Society, 2016.

\bibitem{sparse_spanning}
Reut Levi, Dana Ron, and Ronitt Rubinfeld.
\newblock Local algorithms for sparse spanning graphs.
\newblock In Klaus Jansen, Jos{\'{e}} D.~P. Rolim, Nikhil~R. Devanur, and
  Cristopher Moore, editors, {\em Approximation, Randomization, and
  Combinatorial Optimization. Algorithms and Techniques, {APPROX/RANDOM} 2014,
  September 4-6, 2014, Barcelona, Spain}, volume~28 of {\em LIPIcs}, pages
  826--842. Schloss Dagstuhl - Leibniz-Zentrum fuer Informatik, 2014.

\bibitem{spanners}
Reut Levi, Dana Ron, and Ronitt Rubinfeld.
\newblock A local algorithm for constructing spanners in minor-free graphs.
\newblock In Klaus Jansen, Claire Mathieu, Jos{\'{e}} D.~P. Rolim, and Chris
  Umans, editors, {\em Approximation, Randomization, and Combinatorial
  Optimization. Algorithms and Techniques, {APPROX/RANDOM} 2016, September 7-9,
  2016, Paris, France}, volume~60 of {\em LIPIcs}, pages 38:1--38:15. Schloss
  Dagstuhl - Leibniz-Zentrum fuer Informatik, 2016.

\bibitem{aviad}
Yishay Mansour, Aviad Rubinstein, Shai Vardi, and Ning Xie.
\newblock Converting online algorithms to local computation algorithms.
\newblock In Artur Czumaj, Kurt Mehlhorn, Andrew~M. Pitts, and Roger
  Wattenhofer, editors, {\em Automata, Languages, and Programming - 39th
  International Colloquium, {ICALP} 2012, Warwick, UK, July 9-13, 2012,
  Proceedings, Part {I}}, volume 7391 of {\em Lecture Notes in Computer
  Science}, pages 653--664. Springer, 2012.

\bibitem{mansour2013local}
Yishay Mansour and Shai Vardi.
\newblock A local computation approximation scheme to maximum matching.
\newblock In {\em Approximation, Randomization, and Combinatorial Optimization.
  Algorithms and Techniques}, pages 260--273. Springer, 2013.

\bibitem{molloy1997bound}
Michael Molloy and Bruce Reed.
\newblock A bound on the strong chromatic index of a graph.
\newblock {\em {j}ournal of {c}ombinatorial {t}heory, Series B},
  69(2):103--109, 1997.

\bibitem{mike_stoc}
Michael Molloy and Bruce Reed.
\newblock Further algorithmic aspects of the local lemma.
\newblock In {\em S{TOC} '98 ({D}allas, {TX})}, pages 524--529. ACM, New York,
  1999.

\bibitem{mike_book}
Michael Molloy and Bruce Reed.
\newblock {\em Graph colouring and the probabilistic method}, volume~23 of {\em
  Algorithms and Combinatorics}.
\newblock Springer-Verlag, Berlin, 2002.

\bibitem{Moser}
Robin~A. Moser.
\newblock A constructive proof of the {L}ov\'asz local lemma.
\newblock In {\em S{TOC}'09---{P}roceedings of the 2009 {ACM} {I}nternational
  {S}ymposium on {T}heory of {C}omputing}, pages 343--350. ACM, New York, 2009.

\bibitem{MT}
Robin~A. Moser and G{\'a}bor Tardos.
\newblock A constructive proof of the general {L}ov\'asz local lemma.
\newblock {\em J. ACM}, 57(2):Art. 11, 15, 2010.

\bibitem{nguyen2008constant}
Huy~N Nguyen and Krzysztof Onak.
\newblock Constant-time approximation algorithms via local improvements.
\newblock In {\em 2008 49th Annual IEEE Symposium on Foundations of Computer
  Science}, pages 327--336. IEEE, 2008.

\bibitem{papafocus}
Christos~H. Papadimitriou.
\newblock On selecting a satisfying truth assignment.
\newblock In {\em FOCS}, pages 163--169. IEEE Computer Society, 1991.

\bibitem{reingold2016new}
Omer Reingold and Shai Vardi.
\newblock New techniques and tighter bounds for local computation algorithms.
\newblock {\em Journal of Computer and System Sciences}, 82(7):1180--1200,
  2016.

\bibitem{Ronitt}
Ronitt Rubinfeld, Gil Tamir, Shai Vardi, and Ning Xie.
\newblock Fast local computation algorithms.
\newblock In Bernard Chazelle, editor, {\em Innovations in Computer Science -
  {ICS} 2010, Tsinghua University, Beijing, China, January 7-9, 2011.
  Proceedings}, pages 223--238. Tsinghua University Press, 2011.

\bibitem{SacksS10}
Michael Saks and C.~Seshadhri.
\newblock Local monotonicity reconstruction.
\newblock {\em SIAM Journal on Computing}, 39(7):2897--2926, 2010.

\bibitem{Shearer}
J.B. Shearer.
\newblock On a problem of {S}pencer.
\newblock {\em Combinatorica}, 5(3):241--245, 1985.

\bibitem{spencer_hay}
Joel Spencer.
\newblock Needles in exponential haystacks ii.

\bibitem{aravind_08}
Aravind Srinivasan.
\newblock Improved algorithmic versions of the {L}ov\'asz local lemma.
\newblock In Shang{-}Hua Teng, editor, {\em SODA}, pages 611--620. {SIAM},
  2008.

\bibitem{suomela2013survey}
Jukka Suomela.
\newblock Survey of local algorithms.
\newblock {\em ACM Computing Surveys (CSUR)}, 45(2):24, 2013.

\bibitem{yekhanin2012locally}
Sergey Yekhanin et~al.
\newblock Locally decodable codes.
\newblock {\em Foundations and Trends{\textregistered} in Theoretical Computer
  Science}, 6(3):139--255, 2012.

\end{thebibliography}

\appendix

\section{Proofs for our  Applications} 

In this section we prove Theorems~\ref{SAT},~\ref{lca_theorem} and~\ref{non_uniform_hyper}.

\subsection{Proof of Theorem~\ref{SAT}}

We first briefly recall the application of the LLL in~\cite{2ke}. For each variable $x_i$, let $d_i$ denote the number of clauses in which $x_i$ occurs and assume that $\theta_i d_i$ of these occurrences are positive, for some $\theta_i \in [0,1]$. Let $d = \max_{ i \in [n] } d_i $ and let $\mu$ be the product measure over the variables of $\phi$ that sets each variable $x_i$ to \emph{true} with probability $\frac{1}{2} + \frac{ 2(1-\theta_i)d_i  - d}{2dk } $.  In~\cite{2ke} it is shown that if $d (k+1)  \le  2^{k+1}/\mathrm{ e}$ and we set $\psi_j = \frac{e }{ 2^k-e}  = O(1)$ for each $j  \in [m]$, then the LLL conditions are satisfied.

To establish part (a) of Theorem~\ref{SAT}, recall the definition of $\zeta$ and $\epsilon$ in Theorem~\ref{main} and notice that it implies that 
\begin{align}
1 - \epsilon := (kd)^{-\zeta} \ge ((k+1)d)^{-\zeta} \enspace.
\end{align}
Thus, in order to meet the requirement of Theorem~\ref{main} that the LLL conditions hold with an $\epsilon$-slack, i.e., that  $d   (k+1)  \mathrm{e}    \le  (1 - \epsilon) 2^{k+1}$, it is enough that
\begin{align}\label{basic_yo}
 d   (k+1)  \mathrm{e}    \le   ((k+1)d)^{-\zeta}    2^{k+1} \enspace.
\end{align}
Setting $\eta = \zeta$ in~\eqref{basic_yo}, we get the condition of part (a) of Theorem~\ref{SAT}, concluding the proof. Part (b) of Theorem~\ref{SAT} is a straightforward application of Theorem~\ref{main} and  Remark~\ref{restricted_version}.

\subsection{Proof of Theorem~\ref{lca_theorem}}\label{coloring_proof}

We'll need to briefly recall the key ideas in the analysis of the algorithm of~\cite{molloy1997bound}.

In the first phase, the algorithm operates on the set $\Omega$ of complete but not necessarily proper colorings of $G$ with at most $\Delta +1 - Z$ colors, where $Z= B/(\mathrm{e}^6 \Delta)$. For a vertex $v$ and a state $\sigma \in \Omega$, say that a color $c$ is \emph{stable} if it is assigned to at least two non-adjacent neighbors of $v$ and, moreover, all neighbors of $v$ with color $c$ do not  belong in a monochromatic edge in $\sigma$. Let $X_v(\sigma) $ be the number of stable colors for $v$ at $\sigma$. For each vertex $v$, define the bad event  $A_v = \left\{ \sigma \in \Omega:  X_v(\sigma) \le 
Z
\right\} $ with respect to the probability space $(\mu , \Omega)$, where $\mu$ is the uniform measure over $\Omega$. A coloring $\sigma^* \in \Omega$ that avoids all bad events, can be efficiently transformed to a proper coloring of $G$. To see this, consider the partial \emph{proper} coloring $\sigma'$ that results by uncoloring every vertex in $\sigma^*$ that belongs in a monochromatic edge. Since $\sigma^*$ avoided all bad events, this means that in the neighborhood of every [uncolored] vertex in $\sigma'$, at least $Z$ colors appear at least twice. Therefore, in $\sigma'$, for every vertex $v$ both of the following hold: (i) $v$ has at most $\Delta-(2Z+s(v))$ uncolored neighbors, where $s(v)$ is the number of colors appearing exactly once in the neighborhood of $v$, and (ii) at least $\Delta+1-Z-(Z+s(v))$ colors are available, i.e., do not appear in $v$'s neighborhood. Thus, the graph induced by the uncolored vertices can be colored with available colors using the greedy heuristic.

To prove that we can find efficiently a coloring $\sigma^* \in \Omega$ that avoids all bad events we use the following two lemmas from the analysis of~\cite{molloy1997bound} (slightly modified to fit our needs). Below, both the expectation and the probability are with respect to $\mu$.
\begin{lemma}[\cite{molloy1997bound}]\label{expect_coloring}
$\ex[X_v] \ge 2Z
$.
\end{lemma}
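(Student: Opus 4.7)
The plan is to lower-bound $\ex[X_v]$ by isolating a clean subevent of ``color $c$ is stable for $v$'' whose probability factors nicely under the uniform measure, and then exploit the assumption that $N(v)$ has at least $B$ non-edges. Writing $\ex[X_v] = \sum_c \Pr[c \text{ is stable for } v]$ by linearity, I would avoid computing the stability probability directly---it is awkward because the definition combines two intertwined conditions (two non-adjacent neighbors of $v$ colored $c$, and no neighbor of $v$ colored $c$ lying in a monochromatic edge). Instead, for each non-adjacent pair $\{u,w\} \subseteq N(v)$ and each color $c$, I would introduce the indicator $Z_{uw,c}$ of the event that $u$ and $w$ both receive color $c$ and no vertex in $(N(u) \cup N(w) \cup N(v)) \setminus \{u,w\}$ receives color $c$.

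Two observations are then key. First, $Z_{uw,c}=1 \Rightarrow c$ is stable for $v$: the pair $\{u,w\}$ witnesses the first part of the definition, and since $u,w$ are non-adjacent (so $u \notin N(w)$ and $w \notin N(u)$) while every other vertex of $N(u) \cup N(w)$ is forbidden to take color $c$, neither $u$ nor $w$ lies in a monochromatic edge; moreover no other neighbor of $v$ is colored $c$, so the ``all neighbors with color $c$'' clause is vacuously satisfied. Second, for each fixed color $c$ at most one pair can satisfy $Z_{uw,c}$, because the event forces \emph{exactly} two neighbors of $v$ to have color $c$. Combining these, $X_v \geq \sum_c \sum_{\{u,w\}} Z_{uw,c}$. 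Setting $C := \Delta+1-Z$ and using $|(N(u) \cup N(w) \cup N(v)) \setminus \{u,w\}| \leq 3\Delta-2$, the uniform measure yields
\[
\Pr[Z_{uw,c}] \;\geq\; \frac{1}{C^2}\left(1-\frac{1}{C}\right)^{3\Delta}.
\]

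Finally, the hypothesis that $N(v)$ contains at most $\binom{\Delta}{2}-B$ edges gives at least $B$ non-adjacent pairs in $N(v)$ (for $v$ with $d(v)=\Delta$; vertices of smaller degree are handled separately in the algorithm of~\cite{molloy1997bound}). Summing over the $C$ colors and these $\geq B$ pairs,
\[
\ex[X_v] \;\geq\; C \cdot B \cdot \frac{1}{C^2}\left(1-\frac{1}{C}\right)^{3\Delta} \;=\; \frac{B}{C}\left(1-\frac{1}{C}\right)^{3\Delta} \;\geq\; \frac{B}{\Delta}\, e^{-3}(1-o(1)),
\]
which beats $2Z = 2B/(e^6\Delta)$ since $e^3 > 2$ leaves a constant factor of slack that comfortably absorbs the lower-order $(1-o(1))$ term once $\Delta \geq \Delta_0$. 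The main obstacle is the design step itself: finding a subevent of ``$c$ is stable'' that is simultaneously (a) a clean product of independent uniform coordinate events, and (b) tight enough that the loss incurred by passing from stability to $Z_{uw,c}$ is only a constant factor---precisely the gap between $e^{-3}$ and $2e^{-6}$ that makes the argument go through.
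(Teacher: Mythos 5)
The paper itself does not prove this lemma; it cites it directly from Molloy and Reed~\cite{molloy1997bound}, so there is no in-paper argument to compare against. Your reconstruction is correct and is, as far as I can tell, essentially the standard Molloy--Reed computation: the sub-event $Z_{uw,c}$ is a product of independent coordinate events under the uniform measure; it implies stability of $c$ (the non-adjacency of $u$ and $w$, plus forbidding $c$ on $N(u)\cup N(w)\cup N(v)\setminus\{u,w\}$, guarantees neither $u$ nor $w$ lies in a monochromatic edge, and no other neighbor of $v$ is colored $c$); and the disjointness over pairs $\{u,w\}$ for a fixed $c$, because the event pins down exactly which two neighbors of $v$ receive $c$, is what lets you sum indicators rather than union-bound. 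The arithmetic also checks out: with $C=\Delta+1-Z$ and $Z\le \Delta/(2e^6)$, one has $3\Delta/C\le 3+o(1)$, hence $(1-1/C)^{3\Delta}\ge e^{-3}(1-o(1))$ and $B/C\ge (B/\Delta)(1-o(1))$, and $e^{-3}>2e^{-6}$ by roughly a factor of $e^3/2\approx 10$, so $\ex[X_v]\ge 2Z$ holds comfortably for $\Delta\ge\Delta_0$.

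The one point that deserves more than a parenthetical is the one you flag yourself: the lower bound of $B$ on the number of non-edges in $N(v)$ follows from the hypothesis $|E(N(v))|\le\binom{\Delta}{2}-B$ only when $\deg(v)=\Delta$, and for lower-degree vertices $\ex[X_v]\ge 2Z$ need not hold. Molloy--Reed handle this by only insisting on the bound for high-degree vertices (vertices of degree at most, say, $\Delta-Z$ can always be finished greedily in the second phase regardless of $X_v$). Making that reduction explicit, rather than deferring it to the citation, would close the only real gap in the write-up; the rest is sound.
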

\begin{lemma}[\cite{molloy1997bound}]\label{concent_coloring}
$$\Pr\left[ |X_v -\ex[ X_v] | > (\log \Delta)^2 \sqrt{ \ex[X_v] }\right]  \le \Delta^{- \frac{\log \Delta }{1000} }.      $$
\end{lemma}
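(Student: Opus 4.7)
\vspace{1ex}\noindent\textbf{Proof proposal.}
The tool is Talagrand's inequality in the Molloy--Reed form: for $X \ge 0$ determined by independent trials, with (i) single-trial Lipschitz constant $c$ and (ii) the event $\{X \ge s\}$ certifiable by some set of at most $rs$ trials, one has
\[
\Pr\bigl[|X - \ex X| > t + 60 c \sqrt{r \ex X}\bigr] \le 4 \exp\bigl(-t^2/(8 c^2 r \ex X)\bigr).
\]
Substituting $t = \tfrac12 (\log \Delta)^2 \sqrt{\ex X_v}$ together with $c = O(1)$ and any $r = \mathrm{polylog}(\Delta)$ yields Talagrand exponent at least $\Omega((\log \Delta)^2)$, comfortably dominating the stated target $\Delta^{-(\log \Delta)/1000} = \exp(-(\log\Delta)^2/1000)$. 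So the whole game is to verify the two hypotheses with small constants.

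The Lipschitz condition is routine. Note first that $X_v$ depends only on the colours of vertices in $N(v) \cup N(N(v))$, at most $\Delta^2+\Delta$ trials. Recolouring a single vertex $u$ from $a$ to $b$ can affect only the stability of colours $a$ and $b$: if $u \in N(v)$ it may add/remove a witness pair for $a$ or $b$ and simultaneously create or destroy a monochromatic edge incident to $u$; if $u \in N(N(v)) \setminus N(v)$ it can only create/destroy a monochromatic edge at a single neighbour of $v$. A short case analysis gives $|X_v(\sigma) - X_v(\sigma')| \le 2$.

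Certifiability is the real work and the main obstacle. A naive certificate for $\{X_v \ge s\}$ must rule out a monochromatic edge at \emph{every} neighbour of $v$ coloured $c$, a universal statement over the random set $W_c := N(v) \cap \mathrm{colour}^{-1}(c)$ whose size is a priori as large as $\Delta$, giving certificate size $\Omega(\Delta)$ per stable colour. The plan is to apply Talagrand not to $X_v$ itself, but to a lower-bounding surrogate $\hat X_v \le X_v$ that counts only colours admitting a \emph{bounded-size local witness}: colours $c$ for which (a) $|W_c| \le \log \Delta$, (b) $W_c$ contains a non-adjacent pair, (c) $|N(w) \cap N(v)| \le \log \Delta$ for every $w \in W_c$, and (d) no vertex of $\bigcup_{w \in W_c} N(w)$ is coloured $c$. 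On the high-probability ``tame'' event $\mathcal{E}$ that colour multiplicities in each radius-one ball are bounded by $\log \Delta$ (whose complement has probability at most $\Delta^{-\omega(\log \Delta)}$ by Chernoff plus a union bound), condition (a) and (c) hold globally, and a certificate for $\{\hat X_v \ge s\}$ becomes $W_c \cup \bigcup_{w \in W_c} N(w)$ per colour, of size $O(\mathrm{polylog}\,\Delta)$, giving $r = \mathrm{polylog}(\Delta)$. Setting $\hat X_v = 0$ off $\mathcal{E}$ makes the certifiability hypothesis global, at the cost of an absorbable $\Delta^{-\omega(\log\Delta)}$ additive slack in the failure probability.

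It then remains to check two things: first, via a Poisson-style first-moment calculation, that $\ex[X_v - \hat X_v] = o(\sqrt{\ex X_v})$, so that concentration of $\hat X_v$ around its mean transfers to $X_v$ with only a negligible mean-shift; second, that the certificate in (d) really does lie inside the claimed $O(\mathrm{polylog}\,\Delta)$-size trial subset on $\mathcal{E}$. The second check is the delicate one, because (d) is a universal statement whose certificate must itself be a subset of trials (not a single trial); the key is that under (a) and (c) the relevant union has size $O(\log^2 \Delta)$. Granting these, Talagrand applied to $\hat X_v$ with $c = O(1)$ and $r = O(\mathrm{polylog}\,\Delta)$ at $t = \tfrac12(\log\Delta)^2\sqrt{\ex X_v}$ produces the inequality $\Pr[|\hat X_v - \ex \hat X_v| > t] \le \exp(-\Omega((\log\Delta)^2))$, and combining with the mean-gap bound and the $\Pr[\bar{\mathcal{E}}]$ slack yields the stated concentration for $X_v$ itself.
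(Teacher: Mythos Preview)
The paper does not prove this lemma; it is quoted from Molloy--Reed and used as a black box. So there is no in-paper proof to compare against, but your attempt can be assessed against the Molloy--Reed original, which does use Talagrand's inequality but via a different decomposition.

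Your proposal has a genuine gap in the certifiability step. Condition (c) bounds only $|N(w)\cap N(v)|$, not $|N(w)|$; each $w\in W_c$ can still have up to $\Delta$ neighbours. Certifying the universal statement (d) --- that no vertex of $\bigcup_{w\in W_c} N(w)$ is coloured $c$ --- requires pinning down \emph{every} vertex in that union: if any $u\in N(w)$ is left out of the certificate, an adversary may recolour $u$ to $c$, putting $w$ in a monochromatic edge and destroying the stable colour. Hence the certificate for a single colour has size up to $(\log\Delta)\cdot\Delta$ under (a), not $O(\log^2\Delta)$ as claimed, so $r=\Theta(\Delta)$ and the Talagrand exponent degrades to $O((\log\Delta)^4/\Delta)$. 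Conditioning on the tame event $\mathcal E$ does not help: on $\mathcal E$ few vertices of $N(w)$ \emph{have} colour $c$, but certifying that \emph{none} do still forces you to reveal all of $N(w)$.

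The Molloy--Reed argument avoids this by decomposing rather than truncating. One writes $X_v\ge A_v-D_v$, where $A_v$ counts colours assigned to at least two non-adjacent neighbours of $v$, and $D_v$ counts those colours for which some such neighbour lies in a monochromatic edge. Both are $O(1)$-Lipschitz and $O(1)$-certifiable: for $A_v$ name the two witnessing neighbours per colour; for $D_v$ name in addition one neighbour $w$ with that colour together with a neighbour of $w$ witnessing the monochromatic edge. Talagrand applied to each piece separately, combined with a first-moment bound showing $\ex[D_v]$ is small, yields the stated concentration. The key point is that $D_v$ is an \emph{existential} count and hence has short certificates, whereas your surrogate $\hat X_v$ still embeds a universal statement inside each counted colour. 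Your Lipschitz analysis, on the other hand, is fine.
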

Lemmata~\ref{expect_coloring} and~\ref{concent_coloring} imply that if  $B \ge \Delta \log^4 \Delta$ and $\Delta$ is  large enough, then $\Pr[ A_v ]  \le  \Delta^{- \frac{\log \Delta}{1000}  }$. On the other hand, each bad event $A_v$ is mutually independent from all but at most $\Delta^4$ other bad events, since it only depends on the color of vertices which are joined to $v$ by a path of length at most $2$. Thus, for large enough $\Delta$, if $\psi_i = \frac{1}{ \Delta^4-1}$ for every $i \in [m]$, then the LLL condition is satisfied,  implying that the Moser-Tardos algorithm finds a coloring that avoids all bad events quickly.

\begin{proof}[Proof of Theorem~\ref{lca_theorem}]

We consider the constraint satisfaction problem with one variable and one constraint per vertex $v$ of the graph,  the variable expressing the color of $v$, and the constraint including all vertices (variables) within distance 2 from $v$ and forbidding all joint value assignments for which $A_v$ occurs. Observe that knowing the color of the vertices included by the constraint of $v$ is (more than) enough information to determine if $v$ belongs in any monochromatic edge and, thus, whether it retains its color when we uncolor all vertices belonging in monochromatic edges. With this in mind, our local algorithm is the following.

Let $q$ be the number of queries. For each $i \in [q]$, to answer the $i$-th query we use the procedure described in the proof of Theorem~\ref{main} to satisfy all constraints within a ball of some radius of the queried vertex $v_i$. Naturally, this colors all vertices in the neighborhood of $v_i$ (and probably many others). If, in the resulting coloring, we find that $v_i$ participates in a monochromatic edge we ``guess" that $v_i$ will be uncolored at the end of the first phase, otherwise we ``guess" that it will have the colored assigned by our procedure. In the latter case, we return this color as our answer. To answer queries for vertices that we guess will be uncolored at the end of the first phase, we simulate the greedy coloring of the second phase, using the ordering of the vertices by the queries. That is, whenever we guess that $v$ is uncolored, we chose one of its available colors, $c$, return it as our answer to the query, and record $c$ as the color of $v$. If we later need to answer a query for a neighbor $v'$ of $v$ that we also guess to be uncolored, we do not consider $c$ an available color for $v'$. Thus, if our algorithm does not make any wrong guesses, it doesn't make any error at all.



To bound the probability of error we use Theorem~\ref{main}. Since the constraints correspond to the events $A_v$, we see that $k =  \Delta$ and $d = \Delta^4$ . We are interested in responding to at most $n$ queries, i.e.,  $\alpha = 1$.  Letting $p = \max_{v \in V} \Pr[ A_v] $ and setting $\psi_i = \frac{1}{ \Delta^4 -1}$ we see that the LLL condition is satisfied if $p  \Delta^4 \mathrm{e} \le 1 $. Since $\beta, \gamma$ are constants and $p \le \Delta^{- \frac{ \log \Delta }{1000 }  } $, we see that $ p \Delta^4 \mathrm{e}  \le \Delta^{- ( \frac{ 5(1 + \gamma)}{ \beta } +1)  }$ for large enough $\Delta$. Letting $\Delta^{- ( \frac{ 5(1 + \gamma)}{ \beta } +1)  }  =: 1- \epsilon $ and noting that $\max_{i \in [m]} = O(1)$, i.e., $\lambda =0$, we obtain
\begin{align*}
\zeta = \frac{ \log (\frac{1}{1-\epsilon} )}{ \log (kd) } = \frac{ \left( \frac{5(1+\gamma)}{\beta} + 1\right) \log \Delta  }{ 5 \log \Delta }    > \frac{1+\gamma}{ \beta}  \enspace,
\end{align*}
and in turn that  $\beta \zeta > 1+  \gamma + 0$. Thus, Theorem~\ref{main} applies, concluding the proof.
\end{proof}

\subsection{Proof of Theorem~\ref{non_uniform_hyper}}

We consider the uniform measure over all possible $2$-colorings of $\mathcal{H}$ and define one bad event, $A_e$, for each edge $e$, corresponding to $e$ being monochromatic. Clearly, $\Pr[A_e] =  \frac{1}{2^{ |e|-1} } $. If we set $\psi_e = \frac{2 x_e }{1 - x_e } $, where $x_e = \left(\frac{1}{2}\right)^{ \frac{1}{2} ( |e|-1 ) }  $, the LLL conditions are satisfied assuming that
\begin{align}\label{cond_proof}
\sum_{i \ge 3} \Delta_i 2^{-i/2 } \le  \frac{1}{ 6 \sqrt{2} } \enspace.
\end{align}
(For more details, see the proof of Theorem 19.2 in~\cite{mike_book}). Now, since $\max_{e} |e| $ and $\max_i \Delta_i$ are constants we see that the condition of Theorem~\ref{non_uniform_hyper} implies that the LLL condition holds with $\epsilon$ slack and, thus, the proof follows directly from Theorem~\ref{main} and Remark~\ref{restricted_version}.

\section{Improved LLL Criteria and Commutative Algorithms}\label{improvedLLL}

Our techniques can be generalized in two distinct directions. First, so that they apply under more permissive LLL conditions such as  the \emph{cluster expansion condition}~\cite{Bissacot} and  \emph{Shearer's condition}~\cite{Shearer}. Second, they can be used to design local computation algorithms that simulate algorithms in  the abstract settings of the algorithmic Lov\'{a}sz Local Lemma~\cite{AIJACM,HV,AIK}, where the probability space does not necessarily correspond to a product measure, and which capture the \emph{lopsided} version of the LLL~\cite{LopsTrav}. We briefly discuss these extensions below. 

Given a dependency graph $G$ over $[m]$ and a set $S \subseteq [m] $ we denote by $\mathrm{Ind}([m] ) =  \mathrm{Ind}_G([m] )$ the family of subsets of $S$ that correspond to independent sets in $G$.
\paragraph*{Cluster Expansion condition.}
The cluster expansion criterion strictly improves upon the General LLL criterion~\eqref{eq:LLL} by taking advantage of the local density of the dependency  graph. 
\begin{definition}
Given a sequence of positive real numbers $\{ \psi_i \}_{i=1}^{m}$, we say that the cluster expansion condition is satisfied if for each $i \in [m]$
\begin{align*}
\frac{\mu(A_i) }{\psi_i }  \sum_{ S \in \mathrm{Ind}( D(i) \cup \{i \} ) } \prod_{j \in S} \psi_j \le 1 \enspace.
\end{align*}
\end{definition}
\paragraph*{Shearer's condition.} 
Shearer's condition  improves upon the general and cluster expansion LLL conditions by exploiting the global structure of the dependency graph. It is best possible in the sense that if it is not satisfied, then one can always construct a probability space and bad events that are compatible with the given dependency graph, for which the probability of avoiding all bad events is zero.
\begin{definition}\label{ShearerLLL}
Let $\mu =( \mu_1, \mu_2, \ldots, \mu_m) \in \mathbb{R}^{m}  $ be the real vector such that $\mu_i = \mu(A_i)$.  For $S \subseteq [m]$ define  $\mu_S = \prod_{i \in S } \mu_i$ and the polynomial $q_S$
\begin{align*}
q_S  = q_S(\mu) = \sum_{ \substack{ I  \in \mathrm{Ind}([m])   \\ S \subseteq I} } (-1)^{ |I|- |S| }\mu_{I}   \enspace.
\end{align*}
We say that the Shearer's condition is satisfied if $q_S(\mu) \ge 0 $ for all $S \subseteq [m]$, and $q_{\emptyset}(\mu) > 0 $. 
\end{definition}

For the variable setting, the statement of our results remain identical under the cluster expansion and, essentially identical, under Shearer's conditions ($\psi_i$ is replaced by $ q_{\{i \}}( \mu) / q_{ \emptyset }(\mu) $ and we say that the condition holds with $\epsilon$-slack for a given vector $\mu$, if it simply holds for vector $(1+\epsilon) \mu$.) The only thing that changes in the analysis is the bound for the sum of probabilities of witness trees of large size in Lemma~\ref{trick}. (We refer the reader to Section 4 in~\cite{LLLWTL} for further details.)

The first result that made the LLL constructive in a non-product probability space was due to Harris and Srinivasan in~\cite{PermHarris}, who considered the space of permutations endowed  with the uniform measure. Subsequent works by Achlioptas and Iliopoulos~\cite{AIJACM,AIK,AIS} introducing the flaws/actions framework, and of Harvey and Vondr\'{a}k~\cite{HV} introducing the resampling oracles framework, made the LLL constructive in more general settings. These frameworks~\cite{AIJACM,AIK,HV,AIS} provide tools for analyzing \emph{focused} stochastic search algorithms~\cite{papafocus}, i.e., algorithms which, like the Moser-Tardos algorithm, search by repeatedly selecting a flaw of the current state and moving to a random nearby state that avoids it, in the hope that, more often than not, more flaws are removed than introduced, so that a flawless object is eventually reached.

Our techniques can be extended to  these more general settings assuming they are \emph{commutative}, a notion introduced by Kolmogorov~\cite{Kolmofocs,AIS}. While we will not define the class of commutative algorithms here for the sake of brevity, we note that it contains the vast majority of LLL algorithms, including the Moser-Tardos algorithm. The reason why our results apply in this case is because the \emph{witness tree lemma}, i.e., Lemma~\ref{witness_trees}  for the case of the Moser-Tardos algorithm (which was key to our analysis) holds for commutative algorithms~\cite{LLLWTL,AIS}.

\end{document}